\theoremstyle{plain}
\newtheorem{theorem}{Theorem}[section]
\newtheorem{them}[theorem]{Theorem}
\newtheorem{lemma}[theorem]{Lemma}
\newtheorem{coro}[theorem]{Corollary}
\newtheorem{prop}[theorem]{Proposition}
\theoremstyle{definition}
\theoremstyle{remark}
\begin{document}

\articletype{ARTICLE TEMPLATE}

\title{A Class of Distributed Event-Triggered Average Consensus Algorithms for Multi-Agent Systems}

\author{
\name{Ping Xu,\quad Cameron Nowzari,\quad Zhi Tian}
\affil{Department of Electrical and Computer Engineering, George Mason University, Fairfax, VA, 22030, USA}
}


\maketitle

\begin{abstract}
This paper proposes a class of distributed event-triggered algorithms that solve the average consensus problem in multi-agent systems. By designing events such that a specifically chosen Lyapunov function is monotonically decreasing, event-triggered algorithms succeed in reducing communications among agents while still ensuring that the entire system converges to the desired state. However, depending on the chosen Lyapunov function the transient behaviors can be very different. Moreover, performance requirements also vary from application to application. Consequently, we are instead interested in considering a class of Lyapunov functions such that each Lyapunov function produces a different event-triggered coordination algorithm to solve the multi-agent average consensus problem. The proposed class of algorithms all guarantee exponential convergence of the resulting system and exclusion of Zeno behaviors. This allows us to easily implement different algorithms that all guarantee correctness to meet varying performance needs. We show that our findings can be applied to the practical clock synchronization problem in wireless sensor networks (WSNs) and further corroborate their effectiveness with simulation results.   
 
 
\end{abstract}
 
\begin{keywords}Event-triggered control, distributed coordination, multi-agent consensus, varying performance needs, clock synchronization.
\end{keywords}

\section{Introduction}
\label{sec:intro}

The consensus problem of multi-agent systems where a group of agents are required to agree upon certain quantities of interest finds broad applications in areas such as unmanned vehicles, mobile robots, and wireless sensor networks (WSNs)~\citep{liang2012distributed, peng2015distributed, olfati2012coupled}. Toward this problem, one effective and efficient method is the distributed event-triggered coordination approach, which was first proposed in~\citep{dimarogonas2009event}, and have been studied extensively over the last decades~\citep{xie2015event, yi2016distributed, nowzari2016distributed, liu2018fixed}, see references in~\citep{ding2017overview, nowzari2019event} for recent advances and more details.   
 

The main idea behind distributed event-triggered algorithms is that the iterative communication between agents and their one-hop neighbors only happens when certain conditions/events are triggered. Through skipping unnecessary communications, the communication efficiency is increased, and at the same time the desired properties of the system are maintained. The triggering conditions of the event-triggered algorithms can be time-dependent~\citep{seyboth2013event}, state-dependent~\citep{nowzariZeno-free_2014, nowzari2016distributed, liu2018fixed}, or a combination of both~\citep{girard2015dynamic, sun2016new, yi2017distributed}. In general, the time-dependent thresholds are easy to design to exclude deadlocks (or Zeno behavior, meaning an infinite number of events triggered in a finite number of time period~\citep{johansson1999regularization}), but require global information to guarantee convergence to exactly a consensus state. While state-dependent thresholds are easier to design, these triggers might be risky to implement as Zeno behavior is harder to exclude. As the occurrence of Zeno behavior is impossible in a given physical implementation, the exclusion of it is therefore necessary and essential to guarantee the correctness of an event-triggered algorithm.

In this paper, we focus on developing event-triggered algorithms with state-dependent triggering thresholds that exclude the Zeno behavior. To be specific, an event-triggered controller with state-dependent triggering thresholds can generally be developed from a given Lyapunov function to maintain stability of a certain system while reducing sampling or communication, using the given Lyapunov function as a certificate of correctness. In other words, all events are triggered based on how we want the given Lyapunov function to evolve in time. However, there are no formal guarantees on the gained efficiency. Moreover, it is known that a Lyapunov function is not unique for a given system, and each individual function may result in a totally different, but equally valid/correct triggering law. Consequently, there are many works that propose one such algorithm based on one function that all have the same guarantee: asymptotic convergence to a consensus state. That means there is no established way to compare the performance of two different event-triggered algorithms that solve the same problem. In particular, given two different event-triggered algorithms that both guarantee convergence, their trajectories and communication schedules may be wildly different before ultimately converging to the desired set of states. There are some new works that are addressing exactly this topic~\citep{ramesh2016performance, khashooei2017output, borgers2017tradeoffs, heijmans2017stability}, which set the basis for this paper. More specifically, once established methods of comparing the performance of event-triggered algorithms against one another are developed, current available algorithms will likely be revisited to optimize different types of performance metrics. In particular, we notice that different algorithms are better than others in different scenarios when considering metrics such as convergence speed or total energy consumption. Therefore, instead of trying to design only one event-triggered algorithm that simply guarantees convergence, we design an entire class of event-triggered algorithms that can be easily tuned to meet varying performance needs.


Our work is motivated by \citep{nowzari2016distributed} that solves the exact problem we consider, i.e., design a distributed event-triggered algorithm with state-dependent triggers for multi-agent systems over weight-balanced directed graphs. We first develop a distributed event-triggered algorithm based on an alternative Lyapunov candidate function, which we name it as \textbf{Algorithm 2}. For the algorithm proposed by~\citet{nowzari2016distributed}, we name it as \textbf{Algorithm 1}. Observing that the two algorithms result in different performance for different network topologies, we then parameterize an entire class of Lyapunov functions from the two algorithms and show how each individual function can be used to develop a \textbf{Combined Algorithm}. More specifically, choosing any parameter~$\lambda \in [0,1]$ yields an event-triggered algorithm that guarantees convergence. Changing~$\lambda$ can then help achieve varying performance goals while always guaranteeing stability. With the asymptotic convergence and exclusion of Zeno behavior for both \textbf{Algorithm 1} and \textbf{Algorithm 2}, we establish that the entire class of \textbf{Combined Algorithms} also exclude Zeno behavior and guarantee convergence of the system. In addition to the theoretic analysis, we also study the practical clock synchronization problem that exists in WSNs~\citep{dimarogonas2009event}, which is crucial especially when operations such as data fusion, power management and transmission scheduling are performed~\citep{wu2011clock, kadowaki2015event}. We use various simulations to illustrate the correctness and performance of our proposed algorithms. 
 
The rest of this paper is organized as follows. Section~\ref{sec:preli} introduces the preliminaries and Section~\ref{sec:probstate} formulates the problem of interest. Section~\ref{sec:triggerdesign} first summarizes the related work~\citep{nowzari2016distributed} and then proposes a novel strategy based on an alternative Lyapunov function. Section~\ref{sec:analysis} analyzes the non-Zeno behavior and convergence property of the proposed strategy. The combined algorithms that are developed based on the combined Lyapunov functions are proposed in Section~\ref{sec:combined}, followed by a case study of clock synchronization in Section~\ref{sec:clock_sync}. Section~\ref{sec:sim} presents the simulation results and Section~\ref{sec:con} concludes this work.

\textbf{Notations:} $\mathbb{R},\;\mathbb{R}_{>0},\;\mathbb{R}_{\geq 0}$ denote the set of real, positive real, and nonnegative real numbers, respectively. $\mathbf{1}_N\in \mathbb{R}^N$ and $\mathbf{0}_N\in \mathbb{R}^N$ denote the $N\times 1$ column vectors with entries all equal to one and zero, respectively. $\|\cdot\|$ denotes the Euclidean norm for vectors or induced 2-norm for matrices. For a finite set $S$, $|S|$ denotes its cardinality. 

\section{Preliminaries}
\label{sec:preli}
Let $\mathcal{G}=\{\mathcal{V,E},W\}$ denote a weighted directed graph (or weighted digraph) that is comprised of a set of vertices $\mathcal{V}=\{1,\dots,N\}$, directed edges $\mathcal{E}\subset\mathcal{V}\times\mathcal{V}$, and weighted adjacency matrix $W \in \mathbb{R}_{\geq 0}^{N\times N}$. Given an edge $(i,j)\in \mathcal{E}$, we refer to $j$ as an out-neighbor of $i$ and $i$ as an in-neighbor of $j$. The sets of out- and in-neighbors of a given agent $i$ are $\mathcal{N}_i^{out}$ and $\mathcal{N}_i^{in}$, respectively. The weighted adjacency matrix $W$ satisfies $w_{ij}>0$ if $(i,j)\in \mathcal{E}$ and $w_{ij}=0$ otherwise. A path from vertex $i$ to $j$ is an ordered sequence of vertices such that each intermediate pair of vertices is an edge. A digraph $\mathcal{G}$ is strongly connected if there exists a path from all $i\in \mathcal{V}$ to all $j\in \mathcal{V}$. The out- and in-degree matrices $D^{out}$ and $D^{in}$ are diagonal matrices whose diagonal elements are
\begin{equation}
\textstyle d_i^{out}=\sum_{j\in \mathcal{N}_i^{out}} w_{ij}, \quad d_i^{in}=\sum_{j\in \mathcal{N}_i^{in}} w_{ji},
\nonumber
\end{equation}
respectively. A digraph is weight-balanced if $D^{out}=D^{in}$, and the weighted Laplacian matrix is given by $L=D^{out}-W$.

For a strongly connected and weight-balanced digraph, zero is a simple eigenvalue of $L$. In this case, we order its eigenvalues as $\lambda_1=0<\lambda_2\leq \dots \leq \lambda_N$. Note the following property will be of use later:
\begin{equation}
\label{eq:eigen}
\textstyle \lambda_2(L)x^TL^Tx\leq x^TL^TLx \leq \lambda_N(L)x^TL^Tx.
\end{equation}

Another property we need is the Young's inequality~\citep{hardy1952inequalities}, which states that given $x,y\in\mathbb{R}$, for any $\varepsilon\in \mathbb{R}_{>0}$,
\begin{equation}
\label{eq:youngs}
\textstyle xy\leq \frac{x^2}{2\varepsilon}+\frac{\varepsilon y^2}{2}.
\end{equation}
 
\section{Problem Statement}
\label{sec:probstate}
Consider the average consensus problem for an $N$-agent network described by a weight-balanced and strongly connected digraph $\mathcal{G}=\{\mathcal{V,E},W\}$. Without loss of generality, we say that an agent $i$ is able to receive information from neighbors in~$\mathcal{N}_i^{out}$ and send information to neighbors in~$\mathcal{N}_i^{in}$. Assume that all inter-agent communications are instantaneous and of infinite precision. Let~$x_i$~denote the state of agent~$i\in\mathcal{V}$~and consider the single-integrator dynamics
\begin{equation}
\label{eq:single_dymic}
\textstyle \dot{x}_i(t)=u_i(t).
\end{equation}

The well-known distributed continuous control law
\begin{equation}
\label{eq:ctr_law}
\textstyle u_i(t)=-\sum_{j\in\mathcal{N}_i^{out}}w_{ij}(x_i(t)-x_j(t))
\end{equation}
drives the states of all agents in the system to asymptotically converge to the average of their initial states~\citep{olfati2004consensus}. However, its implementation requires all agents to continuously access their neighbors' state information and keep updating their own control signals, which is practically unrealistic in terms of both communication and control. To relax both of these requirements, we adopt the modified distributed event-triggered control law~\citep{dimarogonas2012distributed}
\begin{equation}
\label{eq:ctr_law_modified}
\textstyle u_i(t)=-\sum_{j\in\mathcal{N}_i^{out}}w_{ij}(\hat{x}_i(t)-\hat{x}_j(t)),
\end{equation}
where $\hat{x}_i(t)$ to denote the last broadcast state of agent $i$ and it remains constant between two broadcasts. That is, if we let $t_{last}$ be the last time at which agent $i$ broadcasts its state information and $t_{next}$ be the next time it is going to broadcast, then $\hat{x}_i(t)=x_i({t_{last}})$ for $t\in [t_{{last}},t_{next})$. With this framework, neighbors of a given agent are able to receive state information from it only when this agent decides to broadcast its state information to them. After receiving the information from their neighbors, agents then update their own control signals.

Along with the above controller~\eqref{eq:ctr_law_modified}, each agent $i$ is equipped with a triggering function $f_i(\cdot)$ that takes values in $\mathbb{R}$. Our first objective is to identify triggers that depend on local information only, i.e., on the true state $x_i(t)$, its last broadcast state $\hat{x}_i(t)$, and its neighbors last broadcast state $\hat{x}_j(t)$ for $j\in \mathcal{N}_i$. Specifically, we need to design triggering functions for each agent $i\in \mathcal{V}$ such that an event is triggered as soon as the triggering condition
\begin{equation}
\label{eq:triggering_condition}
\textstyle f_i(t,x_i(t),\hat{x}_i(t),\hat{x}_j(t))> 0
\end{equation}
is fulfilled. The triggered event then drives agent $i$ to broadcast its state so that its neighbors can update their states. 
To do so, the general steps are to identify a Lyapunov function for the system, and then derive triggering rules from the Lyapunov function while maintaining the stability of the system and ensures asymptotic convergence to a consensus state.
	
Notice that a Lyapunov function is not unique for a given system, and each individual function may result in a totally different, but equally valid/correct triggering law. Moreover, when considering metrics such as convergence speed or total energy consumption, different algorithms are better than others in different scenarios. Since there is no established way to compare the performance of two different event-triggered algorithms that solve the same problem and performance requirements may vary from application to application, therefore, our second objective is to design an entire class of event-triggered algorithms that can be easily tuned to meet varying performance needs. Before presenting our work, we first introduce the algorithm that motivates our work~\citep{nowzari2016distributed}. 
 

\section{Distributed Trigger Design}
\label{sec:triggerdesign}

\subsection{Related work}
\label{subsec:related}
The exact same problem of distributed event-triggered coordination for multi-agent systems over weight-balanced digraphs has been studied by~\citet{nowzari2016distributed}. As their findings are essential in developing our algorithms, we first summarize their algorithm and name it \textbf{Algorithm 1}.
 
The event-triggered law proposed in~\citep{nowzari2016distributed} is Lyapunov-based, with the Lyapunov candidate function be
\begin{equation}
\label{eq:Lyap_V1}
\textstyle V_1(x(t)) = \frac{1}{2}(x(t)-\bar{x})^T(x(t)-\bar{x}),
\end{equation}
where $x(t)=(x_1(t),...,x_N(t))^T\in \mathbb{R}^N$ is the column vector of all
agents' states and $\bar{x}=\frac{1}{N}\sum_{i=1}^Nx_i(0)\mathbf{1}_N$ is the average of all initial conditions.

The derivative of $V_1(x(t))$ takes the form
\begin{equation}
\label{eq:Lyap_V1_dot}
\textstyle
\begin{split}
\dot{V}_1(x(t)) &= x^T(t)\dot{x}(t)-\bar{x}^T\dot{x}(t) =-x^T(t)L\hat{x}(t) + \bar{x}^TL\hat{x}(t)  = -x^T(t)L\hat{x}(t),
\end{split}
\end{equation}
where $\dot{x}(t)=u(t)=-L\hat{x}(t)$ is the compact vector-matrix form of equation \eqref{eq:single_dymic} and \eqref{eq:ctr_law_modified}, with $\hat{x}(t)=(\hat{x}_1(t),...,\hat{x}_N(t))^T \in \mathbb{R}^N$ the vector of last broadcast states of all agents. The second term $\bar{x}^TL\hat{x}(t)=0$ comes from the fact that the digraph $\mathcal{G}$ is weight-balanced, meaning $\mathbf{1}_N^TL=\mathbf{0}^T$, therefore $\bar{x}^TL\hat{x}(t)=\frac{1}{N}\sum_{i=1}^Nx_i(0)\mathbf{1}_N^TL\hat{x}(t)=0$.

Expand \eqref{eq:Lyap_V1_dot} and apply Young's inequality~\eqref{eq:youngs}, $\dot{V}_1(x(t))$ is upper bounded by
\begin{equation}
\label{eq:V1_dot_upbound}
\textstyle \dot{V}_1(x(t))\leq -\frac{1}{2}\sum_{i=1}^N \sum_{j\in \mathcal{N}_i^{out}} w_{ij} \Big[(1-a_i)(\hat{x}_i(t)-\hat{x}_j(t))^2-\frac{e_i^2(t)}{a_i} \Big],
\end{equation}
where $a_i\in (0,1)$ and $e_i(t)=\hat{x}_i(t)-x_i(t)$ is the difference between agent $i$'s last broadcast state and its current state at time $t$.

To make sure that the Lyapunov function $V_1(x(t))$ is monotonically decreasing requires
\begin{equation}
\textstyle \sum_{j\in \mathcal{N}_i^{out}} w_{ij} \Big[(1-a_i)(\hat{x}_i(t)-\hat{x}_j(t))^2-\frac{e_i^2(t)}{a_i} \Big]\geq 0,
\nonumber
\end{equation}
for all agents $i \in \mathcal{V}$ at all times, which can be accomplished by enforcing
\begin{equation}
\label{eq:Alg1_error_bound}
\textstyle e_i^2(t)\leq \frac{a_i(1-a_i)}{d_i^{out}}\sum_{j\in \mathcal{N}_i^{out}} w_{ij}(\hat{x}_i(t)-\hat{x}_j(t))^2.
\end{equation}

It is found in \citep{nowzari2016distributed} that by setting $a_i=0.5$ for all agents, the trigger design will be optimal. Therefore, the triggering function in \citep{nowzari2016distributed} is defined as
\begin{equation}
\label{eq:Alg1_trig_func}
\textstyle f_i(e_i(t))=e_i^2(t)-\frac{\sigma_i}{4d_i^{out}}\sum_{j\in \mathcal{N}_i^{out}}w_{ij}(\hat{x}_i(t)-\hat{x}_j(t))^2,
\end{equation}
where $\sigma_i \in (0,1)$ is a design parameter that affects the flexibility of the triggers. According to the triggering function \eqref{eq:Alg1_trig_func}, an event is triggered when $f_i(e_i(t))>0$ or when $f_i(e_i(t))=0$ and $\phi_i=\sum_{j\in \mathcal{N}_i^{out}}w_{ij}(\hat{x}_i(t)-\hat{x}_j(t))^2\neq0$.

Basically, the trigger above makes sure that $\dot{V}_1(x(t))$ is always negative as long as the system has not converged, therefore, \textbf{Algorithm 1} guarantees all agents to converge to the average of their initial states, i.e., $\lim_{t\rightarrow \infty}x(t)=\bar{x}=\frac{1}{N}\sum_{i=1}^Nx_i(0)\mathbf{1}_N$,
interested readers are referred to \citep[Theorem 5.3]{nowzari2016distributed} for more details.

\subsection{Proposed new algorithm}
\label{subsec:proposed}
As we know, the Lyapunov function is not unique for the stability studying of the same system, and each individual function may result a totally different triggering law. Therefore, we propose a novel triggering strategy named \textbf{Algorithm 2} based on an alternative Lyapunov candidate function
\begin{equation}
\label{eq:Lyap_V2}
\textstyle V_2(x(t))=\frac{1}{2}x(t)^TL^Tx(t).
\end{equation}
The following result characterizes a local condition for all agents in the network such that the Lyapunov candidate function $V_2(x(t))$ is monotonically nonincreasing.
\begin{lemma}
\label{lemma:V2_dot_upbound}
For $i \in \mathcal{V}$, with $b_i,c_j<\frac{1}{d_i^{out}}\; \forall i,j \in \mathcal{V}$, define $e_i(t)=\hat{x}_i(t)-x_i(t)$ as in Section \ref{subsec:related}, with $u_i(t)$ given in \eqref{eq:ctr_law_modified}, then
\begin{equation}
\label{eq:V2_dot_upbound}
\textstyle \dot{V}_2(x(t))\leq -\sum_{i=1}^N \left[\delta_i u_i^2(t)-\left(\frac{d_i^{out}}{2b_i}+\frac{d_i^{out}}{2c_i}\right)e_i^2(t)\right),
\end{equation}
where
\begin{equation}
\label{eq:delta}
\textstyle \delta_i\triangleq 1-\frac{d_i^{out}b_i}{2}-\sum_{j\in \mathcal{N}_i^{out}}\frac{w_{ij}c_j}{2}.
\end{equation}
\end{lemma}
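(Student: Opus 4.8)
The plan is to differentiate $V_2$ along the closed-loop trajectories, reduce the derivative to a single negative quadratic term in $u$ plus a collection of $u_i e_i$ cross terms, and then absorb each cross term with one application of Young's inequality~\eqref{eq:youngs}, with $b_i$ controlling the ``self'' cross terms and $c_j$ controlling the ``neighbor'' cross terms. First I would compute $\dot{V}_2$: differentiating $V_2(x)=\tfrac12 x^T L^T x$ and substituting the control $u=-L\hat{x}$ from~\eqref{eq:ctr_law_modified}, I would simplify the derivative to the form $\dot{V}_2 = -\|u\|^2 - u^T L e$, where $e=\hat{x}-x$ so that $\hat{x}=x+e$ and $L\hat{x}=-u$. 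The leading term arises from $\|L\hat{x}\|^2=\|u\|^2$, and this reduction leans on the weight-balanced structure of $L$ (in particular $\mathbf{1}_N^T L=\mathbf{0}^T$, so the consensus direction drops out); I would treat obtaining this clean leading term as the first place to be careful.

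Second, I would expand the cross term $u^T L e$ at the node level. Using $(Le)_i = d_i^{out} e_i - \sum_{j\in\mathcal{N}_i^{out}} w_{ij} e_j$, this gives $u^T L e = \sum_{i=1}^N d_i^{out} u_i e_i - \sum_{i=1}^N \sum_{j\in\mathcal{N}_i^{out}} w_{ij} u_i e_j$, so that $\dot{V}_2 = -\|u\|^2 - \sum_i d_i^{out} u_i e_i + \sum_i \sum_{j\in\mathcal{N}_i^{out}} w_{ij} u_i e_j$. This isolates a self cross term weighted by $d_i^{out}$ and a neighbor cross term weighted by $w_{ij}$, each of which pairs a $u$ with an $e$, exactly the structure Young's inequality is built to split.

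Third, I would apply~\eqref{eq:youngs} separately to the two families. For the self term I would take $\varepsilon=1/b_i$, yielding $d_i^{out}|u_i e_i|\le \tfrac{d_i^{out} b_i}{2}u_i^2 + \tfrac{d_i^{out}}{2b_i}e_i^2$; for each neighbor term I would take $\varepsilon=1/c_j$, yielding $w_{ij}|u_i e_j|\le \tfrac{w_{ij} c_j}{2}u_i^2 + \tfrac{w_{ij}}{2c_j}e_j^2$. Summing, the $u_i^2$ contributions combine with $-\|u\|^2$ to produce $-\sum_i\big(1-\tfrac{d_i^{out}b_i}{2}-\tfrac12\sum_{j\in\mathcal{N}_i^{out}}w_{ij}c_j\big)u_i^2 = -\sum_i \delta_i u_i^2$, with $\delta_i$ exactly as in~\eqref{eq:delta}.

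The step I expect to be the main obstacle is the bookkeeping for the $e^2$ terms. The self term contributes $\sum_i \tfrac{d_i^{out}}{2b_i}e_i^2$ directly, but the neighbor term contributes $\tfrac12\sum_i\sum_{j\in\mathcal{N}_i^{out}}\tfrac{w_{ij}}{c_j}e_j^2$, which is indexed by the out-neighbor $j$ rather than by the node $i$. To collect it I would re-index this edge sum by the receiving node, so that $\sum_{i\in\mathcal{N}_j^{in}} w_{ij}$ appears, and then invoke the weight-balanced identity $d_j^{in}=d_j^{out}$ to replace it by $d_j^{out}$; this converts the neighbor contribution into $\sum_j \tfrac{d_j^{out}}{2c_j}e_j^2$. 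Combining the two $e^2$ families then gives the coefficient $\tfrac{d_i^{out}}{2b_i}+\tfrac{d_i^{out}}{2c_i}$ and establishes~\eqref{eq:V2_dot_upbound}. Finally, I would record that the hypotheses $b_i,c_j<1/d_i^{out}$ are precisely what force $\delta_i>0$: they give $\tfrac{d_i^{out}b_i}{2}<\tfrac12$ and $\tfrac12\sum_{j\in\mathcal{N}_i^{out}}w_{ij}c_j<\tfrac12$, so $\delta_i>0$, which is the positivity the later convergence analysis will rely on even though the inequality~\eqref{eq:V2_dot_upbound} itself holds for any positive $b_i,c_j$.
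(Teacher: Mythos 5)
Your proof follows the paper's argument step for step: the same expansion of $\dot{V}_2$ into $-\sum_i u_i^2$ plus the self and neighbor cross terms, the same two applications of Young's inequality with parameters $b_i$ and $c_j$, the same weight-balanced re-indexing ($d_j^{in}=d_j^{out}$) to collect the $e_j^2$ terms, and the same closing remark that $b_i,c_j<1/d_i^{out}$ serves only to make $\delta_i>0$. One small correction: the clean leading term $-\|u\|^2$ needs only the control law $u=-L\hat{x}$, not $\mathbf{1}_N^TL=\mathbf{0}^T$; weight-balance enters the proof solely in the re-indexing step that you correctly single out as the main bookkeeping point.
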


\begin{proof}
See Appendix A.
\end{proof}


From Lemma \ref{lemma:V2_dot_upbound}, a sufficient condition to guarantee the proposed Lyapunov candidate function $V_2(x(t))$ is monotonically decreasing is to ensure that
\begin{align*}
\textstyle \delta_i u_i^2(t)-\Big(\frac{d_i^{out}}{2b_i}+\frac{d_i^{out}}{2c_i}\Big)e_i^2(t)\geq 0
\nonumber
\end{align*}
for all agents $i \in \mathcal{V}$ at all times, or
\begin{equation}
\label{eq:Alg2_error_bound}
\textstyle e_i^2(t) \leq \frac{2\delta_ib_ic_i}{(b_i+c_i)d_i^{out}}\Big(\sum_{j\in\mathcal{N}_i^{out}}w_{ij}(\hat{x}_i(t)-\hat{x}_j(t))\Big)^2.
\end{equation}

The triggering function developed from \textbf{Algorithm 2} is therefore derived as \begin{equation}
\label{eq:Alg2_trig_func}
\textstyle f_i(e_i(t))=e_i^2(t)-\frac{2\sigma_i \delta_ib_ic_i}{(b_i+c_i)d_i^{out}}\Big(\sum_{j\in\mathcal{N}_i^{out}}w_{ij}(\hat{x}_i(t)-\hat{x}_j(t))\Big)^2,
\end{equation}
where $\sigma_i\in (0,1)$ is a design parameter that affects how flexible the trigger is and controls the trade-off between communication and performance. Setting $\sigma_i$ close to 0 is generally greedy, meaning that the trigger is enabled more frequently and more communications are required, therefore makes agent $i$ contribute more to the decrease of the Lyapunov function $V_2(x(t))$, leading to a faster convergence of the network while setting the value of $\sigma_i$ close to 1 achieves the opposite results. Note that the roles of $b_i, c_i, c_j$ are beyond system stabilization, they are also important to the trigger's performance. The larger value of $\frac{2\delta_ib_ic_i}{(b_i+c_i)d_i^{out}}$, the less communication shall be needed since it means that the system is more error-tolerant.

\begin{coro}
\label{coro:Alg2_coro}
For agent $i \in \mathcal{V}$ with the triggering function defined in \eqref{eq:Alg2_trig_func}, if the condition $f_i(e_i)\leq 0$ is enforced at all times, then
\begin{equation}
\textstyle \dot{V}_2(x(t))\leq -\sum_{i=1}^N (1-\sigma_i)\delta_i(\sum_{j\in\mathcal{N}_i^{out}}w_{ij}(\hat{x}_i(t)-\hat{x}_j(t)))^2.
\nonumber
\end{equation}
\end{coro}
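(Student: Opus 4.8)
The plan is to derive the claim directly from Lemma \ref{lemma:V2_dot_upbound} by substituting the enforced triggering condition into its upper bound. First I would record that the control law \eqref{eq:ctr_law_modified} gives $u_i^2(t)=\big(\sum_{j\in\mathcal{N}_i^{out}}w_{ij}(\hat{x}_i(t)-\hat{x}_j(t))\big)^2$, so that the target right-hand side of the corollary is exactly $-\sum_{i=1}^N(1-\sigma_i)\delta_i u_i^2(t)$. Comparing this with the Lemma's bound, the task reduces to showing that, under $f_i(e_i)\leq 0$, the bracketed summand $\delta_i u_i^2(t)-\big(\frac{d_i^{out}}{2b_i}+\frac{d_i^{out}}{2c_i}\big)e_i^2(t)$ is bounded below by $(1-\sigma_i)\delta_i u_i^2(t)$ for each $i$; equivalently, that the $e_i^2(t)$-term is controlled by $\sigma_i\delta_i u_i^2(t)$.

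Next I would combine the two fractions multiplying $e_i^2(t)$, writing $\frac{d_i^{out}}{2b_i}+\frac{d_i^{out}}{2c_i}=\frac{d_i^{out}(b_i+c_i)}{2b_ic_i}$. The hypothesis $f_i(e_i)\leq 0$ applied to the triggering function \eqref{eq:Alg2_trig_func} is precisely the inequality $e_i^2(t)\leq \frac{2\sigma_i\delta_ib_ic_i}{(b_i+c_i)d_i^{out}}u_i^2(t)$. Substituting this bound into the combined coefficient, the factors $d_i^{out}$, $(b_i+c_i)$, and $b_ic_i$ all cancel, collapsing the product to the clean factor $\sigma_i\delta_i$ and yielding $\big(\frac{d_i^{out}}{2b_i}+\frac{d_i^{out}}{2c_i}\big)e_i^2(t)\leq \sigma_i\delta_i u_i^2(t)$. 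Plugging this back into the bracket gives $\delta_i u_i^2(t)-\big(\frac{d_i^{out}}{2b_i}+\frac{d_i^{out}}{2c_i}\big)e_i^2(t)\geq (1-\sigma_i)\delta_i u_i^2(t)$, and negating and summing over $i\in\mathcal{V}$ produces the stated inequality.

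There is no genuine obstacle here: the result is a one-step substitution of the enforced triggering bound into Lemma \ref{lemma:V2_dot_upbound}, and the only point requiring a moment's care is that multiplying the triggering inequality $e_i^2(t)\leq(\cdot)u_i^2(t)$ through by the coefficient $\frac{d_i^{out}(b_i+c_i)}{2b_ic_i}$ preserves the direction of the inequality. This holds because $b_i,c_i>0$ (the same positivity under which Young's inequality \eqref{eq:youngs} was invoked in the proof of the Lemma), and it is exactly this positivity, together with the matching of coefficients in \eqref{eq:Alg2_trig_func}, that makes the cancellation to $\sigma_i\delta_i$ exact rather than merely an upper bound. The sign of $\delta_i$ itself plays no role in the inequality step as stated, though one implicitly relies on $\delta_i>0$—secured by the hypotheses $b_i,c_j<1/d_i^{out}$ via the definition \eqref{eq:delta}—for the resulting bound to serve as a useful monotonic-decrease certificate.
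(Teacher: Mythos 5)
Your proof is correct and follows exactly the route the paper intends: the paper states Corollary \ref{coro:Alg2_coro} without proof precisely because it is the direct substitution of the enforced bound $e_i^2(t)\leq \frac{2\sigma_i\delta_i b_i c_i}{(b_i+c_i)d_i^{out}}u_i^2(t)$ into Lemma \ref{lemma:V2_dot_upbound}, using the same coefficient identity $\frac{d_i^{out}}{2b_i}+\frac{d_i^{out}}{2c_i}=\frac{d_i^{out}(b_i+c_i)}{2b_ic_i}$ that the paper already exploits when deriving \eqref{eq:Alg2_error_bound}. Your added remarks on the positivity of $b_i,c_i$ and the implicit role of $\delta_i>0$ are accurate and consistent with the hypotheses of the Lemma.
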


Similar as the work done in \citep{nowzari2016distributed}, to avoid the possibility that agent $i$ may miss any triggers, we define an event either by
\begin{align}
 f_i(e_i(t))&>0 \quad or \label{eq:Alg2_event1}\\
  f_i(e_i(t))&=0 \quad and \quad \phi_i \neq 0\label{eq:Alg2_event2}
\end{align}
where $\phi_i=(\sum_{j\in\mathcal{N}_i^{out}}w_{ij}(\hat{x}_i(t)-\hat{x}_j(t)))^2$.

We also prescribe the following additional trigger as in \citep{nowzari2016distributed} to address the non-Zeno behavior. Let $t_{last}^i$ be the last time at which agent $i$ broadcasts its information to its neighbors. If at some time $t\geq t_{last}^i$, agent $i$ receives information from a neighbor $j\in \mathcal{N}_i^{out}$, then agent $i$ immediately broadcasts its state if
\begin{equation}
\label{eq:Alg2_time_interval}
\textstyle t \in (t_{last}^i, t_{last}^i+\varepsilon_i),
\end{equation}
where
\begin{equation}
\label{eq:Alg2_varepsilon}
\textstyle \varepsilon_i < \sqrt{\frac{2\sigma_i \delta_ib_ic_i}{(b_i+c_i)d_i^{out}}}
\end{equation}
is a parameter selected to ensure the exclusion of Zeno behavior, and we will demonstrate how it is designed in the following section.

We summarize the differences between \textbf{Algorithm 1} proposed in~\citep{nowzari2016distributed} and \textbf{Algorithm 2} proposed here in Table~\ref{tab:difference}. Once the triggering function and parameters~$\varepsilon_i$ are chosen for each agent, either algorithm can be implemented using the coordination algorithm provided in Table~\ref{tab:algorithm}. 

Note that both algorithms guarantee exponential convergence and the exclusion of Zeno behavior, as analyzed in Section \ref{sec:analysis} and in~\citep[Section 5]{nowzari2016distributed}. However, except for these similarities, we have no idea which algorithm works better for under varying performance need and initial conditions, which motivates our work in Section \ref{sec:combined}.  

\begin{table}
\centering
\tbl{Difference between \textbf{Algorithm 1} and \textbf{Algorithm 2}.}
{\begin{tabular}{|c|c|c|}
  \hline
  	&		 Triggering function & Parameter design    \\
  \hline
 \textbf{Algorithm 1} & $f_i(e_i)\triangleq e_i^2(t)-\frac{\sigma_i}{4d_i^{out}}\sum_{j\in \mathcal{N}_i^{out}}w_{ij}(\hat{x}_i(t)-\hat{x}_j(t))^2$ & $\varepsilon_i < \sqrt{\frac{\sigma_i}{4d_i^{out}w_i^{\max}|\mathcal{N}_i^{out}|}}$  \\
  \hline
  \textbf{Algorithm 2}&$f_i(e_i)\triangleq e_i^2-\frac{2\sigma_i \delta_ib_ic_i}{(b_i+c_i)d_i^{out}}\Big(\sum_{j\in\mathcal{N}_i^{out}}w_{ij}(\hat{x}_i-\hat{x}_j\Big)^2$ &$\varepsilon_i < \sqrt{\frac{2\sigma_i \delta_ib_ic_i}{(b_i+c_i)d_i^{out}}}$\\
 \hline
\end{tabular}}
\label{tab:difference}
\end{table}

\begin{table}
  \centering
  \caption{Distributed Event-Triggered Coordination Algorithm.}\label{tab:algorithm}
  \framebox[.9\linewidth]{\parbox{.85\linewidth}{%
      \parbox{\linewidth}{At all times $t$, agent $i \in \{1,\dots,N\}$ performs:}
      \vspace*{-2.5ex}
      \begin{algorithmic}[1]
        \IF{$f_i(e_i(t))>0$ or ($f_i(e_i(t))=0$ and $\phi_i \neq 0)$}
        \STATE broadcast state information $x_i(t)$ and update control signal~$u_i(t)$
        \ENDIF
        \IF{new information $x_j(t)$ is received from some neighbor(s) $j \in \mathcal{N}_i^{out}$}
        \IF{agent $i$ has broadcast its state at any time $t'\in[t-\varepsilon_i,t)$}
        \STATE broadcast state information $x_i(t)$
        \ENDIF
        \STATE update control signal $u_i(t)$
        \ENDIF
      \end{algorithmic}}}
\end{table}

\section{Stability Analysis of \textbf{Algorithm 2}}
\label{sec:analysis}
In this section, we show that \textbf{Algorithm 2} guarantees that no Zeno behavior exists in the network executions. In addition, we show that when executing \textbf{Algorithm 2}, all agents converge exponentially to the average of their initial states.

\begin{prop}
\label{prop:non-Zeno}
(Non-Zeno Behavior) Consider the system~\eqref{eq:single_dymic} executing control law~\eqref{eq:ctr_law_modified}. The triggering function is given by~\eqref{eq:Alg2_trig_func}. If the underlying digraph of the system is weight-balanced and strongly connected, then when executing the algorithm described in Table~\ref{tab:algorithm}, the system with any initial conditions will not exhibit Zeno behavior.

\begin{proof} To prove that the system does not exhibit Zeno behavior, we need to show that no agent broadcasts its state an infinite number of times in any finite time period. We divide the proof into two steps, the first step shows the existence of that finite time period and gives its value; while in the second step, we show that no information can be transmitted an infinite number of times in that finite time period.

\noindent

\textbf{Step 1}: This step shows that if an agent does not receive new information from its out-neighbors, its inter-events time is bounded by a positive constant.

Assume that agent $i\in\mathcal{V}$ has just broadcast its state at time $t_0$, then $e_i(t_0)=0$. For $t>t_0$, while no new information is received, $\hat{x}_i(t)$ and $\hat{x}_j(t)$ remain unchanged. Given that $\dot{e}_i=-\dot{x}_i$, the evolution of the error is simply
\begin{equation}
\label{eq:Alg2_error_evol}
\textstyle e_i(t)=-(t-t_0)\hat{z}_i,
\end{equation}
where $\hat{z}_i=\sum_{j\in\mathcal{N}_i^{out}}w_{ij}(\hat{x}_j-\hat{x}_i)$. Since we are considering the case that no neighbors of agent $i$ broadcast their states, therefore trigger \eqref{eq:Alg2_time_interval} is irrelevant. We then need to find out the next time point $t^*$ when $f_i(e_i(t^*))=0$ and agent $i$ is triggered to broadcast. This can be done following trigger \eqref{eq:Alg2_event2}. If $\hat{z}_i=0$, no broadcasts will ever happen because $e_i(t)=0$ for all $t\geq t_0$. Consider the case when $\hat{z}_i\neq0$, using \eqref{eq:Alg2_error_evol}, trigger \eqref{eq:Alg2_event2} prescribes a broadcast at time $t^*\geq t_0$ that satisfies
\begin{equation}
\textstyle (t^*-t_0)^2\hat{z}_i^2-\frac{2\sigma_i \delta_ib_ic_i}{(b_i+c_i)d_i^{out}}\hat{z}_i^2=0,
\nonumber
\end{equation}
or equivalently
\begin{equation}
\textstyle (t^*-t_0)^2=\frac{2\sigma_i \delta_ib_ic_i}{(b_i+c_i)d_i^{out}}.
\nonumber
\end{equation}
Therefore, we can lower bound the inter-events time by
\begin{equation}
\textstyle \tau_i=t^*-t_0=\sqrt{\frac{2\sigma_i \delta_ib_ic_i}{(b_i+c_i)d_i^{out}}},
\nonumber
\end{equation}
which explains our choice in \eqref{eq:Alg2_varepsilon}. By this step, if none of agent $i$'s neighbors broadcast, agent $i$ will not be triggered infinitely fast. Next, we show that messages can not be sent infinitely over a finite time period when one or more neighbors of agent $i$ trigger(s).

\textbf{Step 2}: Same as \textbf{Step 1}, assume agent $i$ has just broadcast its state at time $t_0$, thus $e_i(t_0)=0$. Our reasoning is as follows:

1) If no information is received by time $t_0 +\varepsilon_i<t_0+\tau_i$, then no trigger happens for agent $i$.

2) Let us then consider the situation that at least one neighbor of agent $i$ broadcasts its information at some time $t_1\in(t_0,t_0 +\varepsilon_i)$, which means that agent $i$ would also re-broadcast its information at time $t_1$ due to trigger \eqref{eq:Alg2_time_interval}. Define $I$ as the set in which all agents have broadcast information at time $t_1$, then as long as no agent $k \in I$ sends new information to any agent in $I$, agents in $I$ will not broadcast new information for at least $\min_{j\in I} \tau_j$ seconds, which includes the original agent $i$. As no new information is received by any agent in $I$ by time $t_1+\min_{j\in I} \varepsilon_j$, there is no problem.

3) Again consider the case that at least one agent $k$ sends new information to some agent $j \in I$ at time $t_2 \in (t_1, t_1+\min_{j\in I} \varepsilon_j)$, then by trigger \eqref{eq:Alg2_time_interval}, all agents in $I$ would also broadcast their state information at time $t_2$ and agent $k$ will now be added to $I$. The remaining reasoning is just to repeat what has been reasoned, thus, the only situation for infinite communications to occur in a finite time period is to have a network of infinite agents, which is impossible for the $N$-agent network we consider.

Therefore, \textbf{Step 1} and \textbf{Step 2} conclude that \textbf{Algorithm 2} excludes Zeno behavior for the network.
\end{proof}
\end{prop}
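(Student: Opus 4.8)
The plan is to establish a uniform positive lower bound on the inter-event times of each agent, and then to argue that the coupling introduced by the forced re-broadcast rule~\eqref{eq:Alg2_time_interval} cannot produce infinitely many events in any finite window. I would split the argument into two regimes exactly as suggested by the structure of the triggering rules: the isolated regime where an agent receives no new information, and the coupled regime where neighbors' broadcasts can force re-broadcasts.

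First, for the isolated regime, I would suppose agent $i$ has just broadcast at time $t_0$, so that $e_i(t_0)=0$. While no new information arrives, both $\hat{x}_i$ and $\hat{x}_j$ for $j\in\mathcal{N}_i^{out}$ are frozen, so $u_i$ is constant and the error satisfies $\dot{e}_i=-\dot{x}_i=-u_i$. Integrating gives the linear growth $e_i(t)=-(t-t_0)\hat{z}_i$ with $\hat{z}_i=\sum_{j\in\mathcal{N}_i^{out}}w_{ij}(\hat{x}_j-\hat{x}_i)$. Substituting this into the triggering function~\eqref{eq:Alg2_trig_func} and solving $f_i(e_i(t^\ast))=0$ for the first positive root yields the inter-event time $\tau_i=\sqrt{2\sigma_i\delta_i b_i c_i/((b_i+c_i)d_i^{out})}$, which is strictly positive whenever $\hat{z}_i\neq 0$; when $\hat{z}_i=0$ no event is ever triggered. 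This is precisely the constant underlying~\eqref{eq:Alg2_varepsilon}, and it certifies that an isolated agent cannot trigger infinitely fast.

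Second, for the coupled regime, I would track the set $I$ of agents that have been forced to broadcast by the time-interval rule~\eqref{eq:Alg2_time_interval}. The key observation is that once an agent re-broadcasts it resets its error to zero and again enjoys a dwell time of at least $\tau_j$ before its own self-trigger~\eqref{eq:Alg2_event2} can fire; the only mechanism that can break this dwell time early is the arrival of information from an agent not yet in $I$. I would then proceed inductively: each fresh cascade of forced re-broadcasts must be initiated by a broadcast from some new agent $k\notin I$, after which $k$ is absorbed into $I$. Since the network has exactly $N$ agents, this enlargement can happen at most $N$ times, so only finitely many triggering instants accumulate before every agent lies in $I$, at which point all agents rest for at least $\min_{j\in I}\tau_j>0$ seconds.

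I expect the coupled regime to be the main obstacle, because the forced re-broadcasts create a contagion effect in which one agent's event can instantaneously trigger its neighbors, and one must rule out an unbounded chain reaction within a vanishing time interval. The crux is the monotonicity of the set $I$: the argument succeeds only because each new cascade strictly recruits a previously unaffected agent, so finiteness of the vertex set $\mathcal{V}$ caps the total number of cascades at $N$. Making this precise requires carefully nesting the time windows $(t_0,t_0+\varepsilon_i)$, $(t_1,t_1+\min_{j\in I}\varepsilon_j)$, and so on, so that within each window no agent already in $I$ can self-trigger, and confirming that the sole way to sustain communication is to draw in a new agent---which the finite $N$ ultimately forbids.
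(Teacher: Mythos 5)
Your proposal is correct and follows essentially the same two-step structure as the paper's own proof: the isolated-agent dwell time $\tau_i=\sqrt{2\sigma_i\delta_i b_i c_i/((b_i+c_i)d_i^{out})}$ obtained by solving $f_i(e_i(t^\ast))=0$ under linear error growth, followed by the cascade argument in which the set $I$ of broadcasting agents can only grow by recruiting agents outside $I$, so finiteness of $N$ rules out infinitely many events in finite time. Your articulation of the monotonicity of $I$ as the crux is, if anything, a slightly sharper statement of the paper's Step 2 reasoning.
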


Next we establish the global exponential convergence.

\begin{them}
\label{them:Convergence}
(Exponential Convergence to Average Consensus). Given the system~\eqref{eq:single_dymic} executing Table~\ref{tab:algorithm} over a weight-balanced, strongly connected digraph, all agents exponentially converge to the average of their initial states, i.e. $\lim _{t\rightarrow \infty} x(t)=\bar{x}$, where $\bar{x}=\frac{1}{N}\sum_{i=1}^N x_i(0)\mathbf{1}_N$.
\end{them}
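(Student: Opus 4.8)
The plan is to show that along any execution of \textbf{Algorithm 2} the Lyapunov function $V_2$ decays exponentially, and then translate this into exponential convergence of $x(t)$ to the consensus vector $\bar{x}$. Three preliminary observations set the stage. First, since $x^T L^T x = x^T \tfrac{L+L^T}{2} x$ and the symmetric part of the Laplacian of a strongly connected, weight-balanced digraph is positive semidefinite with kernel exactly $\mathrm{span}(\mathbf{1}_N)$, we have $V_2(x)\ge 0$ with equality if and only if $x$ is a consensus vector; thus $V_2$ is a legitimate certificate. Second, the average is invariant: because $\dot{x}(t)=-L\hat{x}(t)$ and $\mathbf{1}_N^T L=\mathbf{0}^T$, we get $\tfrac{d}{dt}\mathbf{1}_N^T x(t)=-\mathbf{1}_N^T L\hat{x}(t)=0$, so $\tfrac1N\mathbf{1}_N^T x(t)\equiv\bar{x}$ and $x(t)-\bar{x}$ stays orthogonal to $\mathbf{1}_N$. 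Third, by Proposition~\ref{prop:non-Zeno} the execution has no Zeno behavior, so the solution is defined for all $t\ge 0$ and $V_2$ is continuous (the state never jumps, only the broadcast/control variables do), so it suffices to bound $\dot{V}_2$ between events.

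The core of the argument is to upgrade the sign-definiteness from Corollary~\ref{coro:Alg2_coro} into a genuine rate $\dot{V}_2\le -kV_2$. Using $u_i=-\sum_{j\in\mathcal{N}_i^{out}}w_{ij}(\hat{x}_i-\hat{x}_j)$, so that $\phi_i=u_i^2$, Corollary~\ref{coro:Alg2_coro} gives $\dot{V}_2\le -\beta\,\|L\hat{x}\|^2$ with $\beta=\min_i(1-\sigma_i)\delta_i$. This constant is strictly positive: $\sigma_i\in(0,1)$, and the hypotheses $b_i,c_j<1/d_i^{out}$ force $\tfrac{d_i^{out}b_i}{2}<\tfrac12$ and $\sum_{j\in\mathcal{N}_i^{out}}\tfrac{w_{ij}c_j}{2}<\tfrac12$ (since $\sum_{j\in\mathcal{N}_i^{out}}w_{ij}=d_i^{out}$), so each $\delta_i>0$.

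The main obstacle, and where the real work lies, is that the Corollary's bound is expressed through the broadcast state $\hat{x}$, whereas $V_2$ is a function of the true state $x$. I would bridge this gap with the enforced triggering inequality. Writing $\hat{x}=x+e$ with $e=\hat{x}-x$, the condition $f_i(e_i)\le 0$ bounds each $e_i^2$ by a constant multiple of $u_i^2$, giving $\|e\|^2\le\gamma\,\|L\hat{x}\|^2$ with $\gamma=\max_i\tfrac{2\sigma_i\delta_ib_ic_i}{(b_i+c_i)d_i^{out}}$. Combining with $\|Lx\|\le\|L\hat{x}\|+\|L\|\,\|e\|$ yields $\|Lx\|\le(1+\|L\|\sqrt{\gamma})\,\|L\hat{x}\|$, i.e. $\|L\hat{x}\|^2\ge\alpha\,\|Lx\|^2$ with $\alpha=(1+\|L\|\sqrt{\gamma})^{-2}>0$. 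Feeding this into the lower bound of \eqref{eq:eigen}, namely $\|Lx\|^2=x^T L^T L x\ge\lambda_2(L)\,x^T L^T x=2\lambda_2(L)V_2$, produces the desired differential inequality $\dot{V}_2\le -2\beta\alpha\lambda_2(L)\,V_2=:-kV_2$.

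Finally, I would close by comparison and conversion. The inequality $\dot{V}_2\le -kV_2$ holds almost everywhere with $V_2$ continuous, so the comparison lemma gives $V_2(x(t))\le V_2(x(0))\,e^{-kt}$. Since $x(t)-\bar{x}\perp\mathbf{1}_N$ and the symmetric part of $L$ is positive definite on $\mathbf{1}_N^{\perp}$, we have $2V_2(x(t))=(x(t)-\bar{x})^T L^T(x(t)-\bar{x})\ge\lambda_2(L)\,\|x(t)-\bar{x}\|^2$, whence $\|x(t)-\bar{x}\|^2\le\tfrac{2}{\lambda_2(L)}V_2(x(0))\,e^{-kt}\to 0$. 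This establishes exponential convergence of every agent's state to the initial average $\bar{x}$, completing the proof. The delicate point throughout is the $\hat{x}$-to-$x$ transfer in the third paragraph; everything else is a routine Lyapunov/comparison argument built on the already-established Corollary~\ref{coro:Alg2_coro} and the spectral bound \eqref{eq:eigen}.
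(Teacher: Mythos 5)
Your proof is correct, and its skeleton matches the paper's: both start from Corollary~\ref{coro:Alg2_coro}, use the enforced trigger inequality \eqref{eq:Alg2_error_bound} together with the spectral bound \eqref{eq:eigen} to produce a linear differential inequality $\dot{V}_2\le -kV_2$, and conclude by integration. Where you genuinely diverge is in how the gap between the broadcast state $\hat{x}$ and the true state $x$ is closed. The paper expands $V_2(x)=\tfrac{1}{2}(\hat{x}-e)^TL^T(\hat{x}-e)$, absorbs the cross terms, and uses \eqref{eq:Alg2_error_bound} with the \emph{upper} half of \eqref{eq:eigen} (the $\lambda_N(L)$ side) to show $V_2(x)\le\bigl(1+\tfrac{2\|L\|\sigma_{\max}\delta_{\max}b_{\max}c_{\max}\lambda_N(L)}{(b_{\min}+c_{\min})d_{\min}^{out}}\bigr)\hat{x}^TL^T\hat{x}$, so that the negative multiple of $\hat{x}^TL^T\hat{x}$ coming from the corollary can be replaced by a negative multiple of $V_2$. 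You instead work with norms: the triangle inequality gives $\|Lx\|\le(1+\|L\|\sqrt{\gamma})\|L\hat{x}\|$ with your $\gamma=\max_i\tfrac{2\sigma_i\delta_ib_ic_i}{(b_i+c_i)d_i^{out}}$, and then the \emph{lower} half of \eqref{eq:eigen}, applied to the true state, gives $\|Lx\|^2\ge 2\lambda_2(L)V_2$. Both close the same loop; yours avoids the quadratic-form expansion and the constant $\lambda_N(L)$ at the price of a $\|L\|\sqrt{\gamma}$ factor, and the two routes differ only in the resulting decay constants. Two further points are in your favor. First, you actually finish the argument: the paper stops at $V_2(x(t))\le V_2(x(0))\exp(At)$ and asserts the conclusion, whereas you convert decay of $V_2$ into the literal claim of the theorem, using average invariance ($\mathbf{1}_N^Tx(t)$ constant) and positive definiteness of the symmetric part of $L$ on the subspace orthogonal to $\mathbf{1}_N$ to get $\|x(t)-\bar{x}\|^2\le\tfrac{2}{\lambda_2(L)}V_2(x(0))e^{-kt}$. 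Second, your uniform constant $\beta=\min_i(1-\sigma_i)\delta_i$ is the correct one: the paper's bound $\dot{V}_2\le(\sigma_{\max}-1)\delta_{\max}\sum_i\phi_i$ contains a sign slip---since $\sigma_{\max}-1<0$, a valid uniform upper bound over agents requires $\delta_{\min}$, not $\delta_{\max}$---and your formulation silently repairs it.
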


\begin{proof}
The triggering events \eqref{eq:Alg2_event1} and \eqref{eq:Alg2_event2} ensure that
\begin{equation}
\label{eq:V2_dot_bound_thrm}
\textstyle \dot{V}_2(x(t))\leq \sum_{i=1}^N (\sigma_i-1)\delta_i\Big(\sum_{j\in\mathcal{N}_i^{out}}w_{ij}(\hat{x}_i(t)-\hat{x}_j(t))\Big)^2.
\end{equation}
To show that the convergence is exponential, we show that the evolution of $V_2(x(t))$ towards $0$ is exponential. Omit the time stamp $t$ for simplicity, and define~$\sigma_{\max}=\max_{i\in\mathcal{V}}\sigma_i$, $\delta_{\max}=\max_{i\in\mathcal{V}}\delta_i$ to further bound \eqref{eq:V2_dot_bound_thrm}:
\begin{equation}
\textstyle \begin{split}
\dot{V}_2(x)&\leq(\sigma_{\max}-1)\delta_{\max} \sum_{i=1}^N \Big(\sum_{j\in\mathcal{N}_i^{out}}w_{ij}(\hat{x}_i -\hat{x}_j)\Big)^2\\
&= (\sigma_{\max}-1)\delta_{\max} \hat{x}^T L^TL\hat{x}\\
&\leq (\sigma_{\max}-1)\delta_{\max} \lambda_2(L)\hat{x}^TL^T\hat{x},
\end{split}
\nonumber
\end{equation}
where we use \eqref{eq:eigen} to come up with the last inequality.
Note that
\begin{equation}
\label{eq:Alg2_V2_bo}
\textstyle \begin{split}
V_2(x)=\frac{1}{2}x^TL^Tx &=\frac{1}{2} (\hat{x}-e)^TL^T(\hat{x}-e)\\
                     &=\frac{1}{2}(\hat{x}^TL^T\hat{x}-\hat{x}^TL^T e -e^TL^T\hat{x}+e^TL^Te)\\
                     & \leq \frac{1}{2}(2\hat{x}^TL^T\hat{x}+2e^TL^Te)\\
                     & \leq \hat{x}^TL^T\hat{x} + \|L\|\|e\|^2.
\end{split}
\end{equation}
Substitute \eqref{eq:Alg2_error_bound} into \eqref{eq:Alg2_V2_bo}, define~$d_{\min}^{out}=\min_{i\in\mathcal{V}}d_i^{out},\; b_{\max}=\max_{i\in\mathcal{V}}b_i, \; c_{\max}=\max_{i\in\mathcal{V}}c_i,\; b_{\min}=\min_{i\in\mathcal{V}}b_i$, and $c_{\min}=\min_{i\in\mathcal{V}}c_i$, using \eqref{eq:eigen}, we have
\begin{equation}
\label{eq:Alg2_V2_bound}
\textstyle \begin{split}
\hat{x}^TL^T\hat{x} + \|L\|\|e\|^2& \leq \hat{x}^TL^T\hat{x} + \|L\| \frac{2\sigma_{\max}\delta_{\max}b_{\max}c_{\max}}{(b_{\min}+c_{\min})d_{\min}^{out}}\hat{x}^TL^TL\hat{x}\\
&\leq \hat{x}^TL^T\hat{x} + \|L\| \frac{2\sigma_{\max}\delta_{\max}b_{\max}c_{\max}}{(b_{\min}+c_{\min})d_{\min}^{out}}\lambda_N(L) \hat{x}^TL^T\hat{x} \\
 &=(1+\frac{2\|L\|\sigma_{\max}\delta_{\max}b_{\max}c_{\max}\lambda_N(L)}{(b_{\min}+c_{\min})d_{\min}^{out}})\hat{x}^TL^T\hat{x}.
 \end{split}
\end{equation}       
Relate \eqref{eq:Alg2_V2_bo} with \eqref{eq:Alg2_V2_bound} gives 
\begin{equation}
\label{eq:Alg2_expConve}
\textstyle \begin{split}
\dot{V}_2(x) &\leq (\sigma_{\max}-1)\delta_{\max} \lambda_2(L)\hat{x}^TL^T\hat{x}\\
          &\leq  \frac{(\sigma_{\max}-1)\delta_{\max} \lambda_2(L)} {2(1+\frac{2\|L\|\sigma_{\max}\delta_{\max}b_{\max}c_{\max}\lambda_N(L)}{(b_{\min}+c_{\min})d_{\min}^{out}})}x^TL^Tx\\
          &=\frac{(\sigma_{\max}-1)(b_{\min}+c_{\min})\delta_{\max} \lambda_2(L)d_{\min}^{out}}{(b_{\min}+c_{\min})d_{\min}^{out}+2\|L\|\sigma_{\max}\delta_{\max}b_{\max}c_{\max}\lambda_N(L)}V_2(x).
\end{split}
\end{equation}
Substitute
$A=\frac{(\sigma_{\max}-1)(b_{\min}+c_{\min})\delta_{\max} \lambda_2(L)d_{\min}^{out}}{(b_{\min}+c_{\min})d_{\min}^{out}+2\|L\|\sigma_{\max}\delta_{\max}b_{\max}c_{\max}\lambda_N(L)}$ into \eqref{eq:Alg2_expConve}, we have $\dot{V}_2(x(t))\leq AV_2(x(t))$, therefore we conclude that $V_2(x(t))\leq V_2(x(0)) \exp(At)$ and the network converges exponentially to the average of its initial state.
\end{proof}

With the theoretical foundation of \textbf{Algorithm 2}, we are now ready to propose a class of event-triggered algorithms that can be tuned to meet varying performance needs under different scenarios.

\section{A Class of Event-Triggered Algorithms}
\label{sec:combined}
As stated in Section \ref{sec:intro}, for a given system, there are many works studying event-triggered control using Lyapunov functions to reach the goal of maintaining the stability of the system, while increasing the efficiency of the system. However, there is very little work currently available that mathematically quantifies these benefits. Recently, some works began establishing results along this line~\citep{antunes2014rollout, ramesh2016performance, khashooei2017output}, still this area is in its infancy. In particular, there are not yet established ways to compare the performance of an event-triggered algorithm with another.
Consequently, many different algorithms can be proposed to ultimately solve the same problem, while each algorithm is slightly different and produces different trajectories. Specifically in our case, \textbf{Algorithm 1} and \textbf{Algorithm 2} solve the exact same problem, and offer the exact same guarantees, i.e., they both exclude Zeno behavior and ensure asymptotic convergence of the network. So, which algorithm should we use? Moreover, we have found that depending on the initial conditions and network topology, each algorithm may out-perform the other in terms of different evaluation metrics. In any case, once these performance metrics become better researched, there will likely be more standard ways to mathematically compare the two different algorithms. Therefore, for now, instead of designing only one event-triggered algorithm for the system that only works better in one situation, we aim to design an entire class of algorithms that can easily be tuned to meet varying performance needs.

We do this by parameterizing a set of Lyapunov functions rather than studying only a specific one. To the best of our knowledge, this paper is then a first study of how to design an entire class of algorithms that use different Lyapunov functions to guarantee correctness, with the intention of being able to use the best one at all times. In this paper, we utilize only two Lyapunov functions, however, we can also use as many Lyapunov functions as we want and combine them all to develop the entire class of algorithms. 

Specifically, given any $\lambda \in [0,1]$, we define a combined Lyapunov function as 
\begin{equation}
\label{eq:Vcom}
\textstyle V_\lambda(x(t)) = \lambda V_1(x(t))+(1-\lambda)V_2(x(t)).
\end{equation}

Accordingly, the derivative of $V_\lambda(x(t))$ takes the form
\begin{equation}
\label{eq:Vcom_dot}
\textstyle \dot{V}_\lambda(x(t)) = \lambda\dot{V}_1(x(t))+(1-\lambda)\dot{V}_2(x(t)).
\end{equation}

Following the steps of deriving the triggering functions in Section \ref{sec:triggerdesign}, the triggering function developed based on the combined Lyapunov function \eqref{eq:Vcom} is given by
\begin{equation}
\label{eq:Com_Alg_Trfunc}
\textstyle \begin{split}
f_i(e_i(t))&=e_i^2(t)-\sigma_i \Big[\frac{\lambda}{4d_i^{out}}\sum_{j\in \mathcal{N}_i^{out}} w_{ij}\Big(\hat{x}_i(t)-\hat{x}_j(t)\Big)^2 +\\
&\frac{(1-\lambda) 2\delta_ib_ic_i}{(b_i+c_i)d_i^{out}}\Big(\sum_{j\in \mathcal{N}_i^{out}}w_{ij}(\hat{x}_i(t)-\hat{x}_j(t))\Big)^2\Big].
\end{split}
\end{equation}

We refer to the algorithm developed from the combined Lyapunov function as the \textbf{Combined Algorithm} parameterized by~$\lambda$, with $\lambda \in [0,1]$. Note that~$\lambda = 0$ recovers \textbf{Algorithm 2} and~$\lambda = 1$ recovers \textbf{Algorithm 1}.

Similarly, for the \textbf{Combined Algorithm}, we use the following events to avoid missing any triggers:
\begin{align}
 \quad f_i(e_i(t))&>0,\label{eq:Com_Alg_event1} \\
 f_i(e_i(t))&=0 \quad and \quad \phi_i \neq 0\label{eq:Com_Alg_event2},
\end{align}
where, with a slight abuse of notation, $\phi_i=\frac{\lambda}{4d_i^{out}}\sum_{j\in \mathcal{N}_i^{out}} w_{ij}(\hat{x}_i(t)-\hat{x}_j(t))^2 +\frac{(1-\lambda) 2\delta_ib_ic_i}{(b_i+c_i)d_i^{out}}(\sum_{j\in \mathcal{N}_i^{out}}w_{ij}(\hat{x}_i(t)-\hat{x}_j(t)))^2$. 

The parameter that bounds the inter-events time and excludes Zeno behavior is also designed:
\begin{equation}
\label{eq:Com_Alg_varepsilon}
\textstyle \varepsilon_i <\sqrt{\frac{\lambda \sigma_i}{4d_i^{out}w_i^{\max}|\mathcal{N}_i^{out}|}+\frac{2(1-\lambda)\sigma_i \delta_i b_ic_i}{(b_i+c_i)d_i^{out}}}.
\nonumber
\end{equation}

Then, with the triggering function \eqref{eq:Com_Alg_Trfunc} and $\varepsilon_i$ defined above, the \textbf{Combined Algorithm} can also be implemented using Table~\ref{tab:algorithm}.


\begin{coro}
Both \textbf{Algorithm 1} and \textbf{Algorithm 2} ensure all agents to exponentially converge to the average of their initial states with the proof that their Lyapunov functions converge exponentially. Therefore, as a linear combination of $V_1(x(t))$ and $V_2(x(t))$, $V_\lambda(x(t))$ also converges exponentially, which means that a network executing the \textbf{Combined Algorithm} shall converge exponentially to the average of its initial states.
\end{coro}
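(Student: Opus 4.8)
The plan is to prove the \textbf{Combined Algorithm} converges exponentially by reproducing \emph{directly for} $V_\lambda$ the chain of estimates used for $V_2$ in Theorem~\ref{them:Convergence}, rather than relying on the slick ``a convex combination of two exponentially decaying functions decays exponentially'' argument. The reason is a subtlety worth flagging at the outset: along an execution of the \textbf{Combined Algorithm} the single error $e_i$ is forced, through the one trigger \eqref{eq:Com_Alg_Trfunc}, to control $\dot V_1$ and $\dot V_2$ \emph{simultaneously}. Consequently $V_1$ and $V_2$ need not be individually monotone along this execution, so one cannot simply invoke the separate bounds $V_1(t)\le V_1(0)e^{A_1 t}$ and $V_2(t)\le V_2(0)e^{A_2 t}$ established for \textbf{Algorithm 1} and \textbf{Algorithm 2} under \emph{their own} triggers. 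The honest target is a single Lyapunov estimate $\dot V_\lambda(x(t)) \le A_\lambda V_\lambda(x(t))$ with $A_\lambda<0$.

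First I would note that the events \eqref{eq:Com_Alg_event1}--\eqref{eq:Com_Alg_event2} keep $f_i(e_i(t))\le 0$ at all times, i.e. $e_i^2(t)$ is bounded by the bracket in \eqref{eq:Com_Alg_Trfunc}. Then, differentiating $V_\lambda$ as in \eqref{eq:Vcom_dot} and inserting the two derivative bounds \eqref{eq:V1_dot_upbound} (with $a_i=1/2$) and \eqref{eq:V2_dot_upbound}, I would collect the consensus terms and the $e_i^2$ terms. Writing $\phi_i^{(1)}=\sum_{j\in\mathcal{N}_i^{out}} w_{ij}(\hat{x}_i-\hat{x}_j)^2$ and $\phi_i^{(2)}=\big(\sum_{j\in\mathcal{N}_i^{out}} w_{ij}(\hat{x}_i-\hat{x}_j)\big)^2$, the aim is the combined analogue of Corollary~\ref{coro:Alg2_coro},
\[ \dot{V}_\lambda(x) \leq -\sum_{i=1}^N (1-\sigma_i)\Big[\tfrac{\lambda}{4}\phi_i^{(1)} + (1-\lambda)\delta_i\phi_i^{(2)}\Big]. \]
Given this, the rest parallels Theorem~\ref{them:Convergence}: using $\sum_i \phi_i^{(2)} = \hat{x}^T L^T L\hat{x}$, $\sum_i \phi_i^{(1)} = 2\hat{x}^T L^T\hat{x}$, and the spectral bound \eqref{eq:eigen}, the right-hand side is dominated by $-\mu\,\hat{x}^T L^T\hat{x}$ for some $\mu>0$ (strict negativity uses $\sigma_i<1$ and $\delta_i>0$, the latter guaranteed by $b_i,c_j<1/d_i^{out}$ in \eqref{eq:delta}). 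I would then bound $V_\lambda$ from above by a multiple of $\hat{x}^T L^T\hat{x}$: the $V_2$ part is exactly \eqref{eq:Alg2_V2_bo}--\eqref{eq:Alg2_V2_bound}, while the $V_1=\tfrac12\|x-\bar x\|^2$ part is absorbed by the spectral-gap (Poincar\'e) inequality $\|x-\bar x\|^2\le \lambda_2(L)^{-1}x^TL^Tx$ on the subspace orthogonal to $\mathbf{1}_N$. Combining gives $\dot V_\lambda \le A_\lambda V_\lambda$, so by the comparison lemma $V_\lambda(t)\le V_\lambda(0)e^{A_\lambda t}\to 0$. Since $\mathbf{1}_N^T x(t)$ is conserved (weight-balancedness) and $V_\lambda\ge \lambda V_1$, $V_\lambda\ge(1-\lambda)V_2$, vanishing of $V_\lambda$ forces $x(t)\to\bar x$.

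The step I expect to be the main obstacle is precisely obtaining the clean combined corollary for \emph{interior} $\lambda$. At the endpoints $\lambda\in\{0,1\}$ one denominator drops out and the single error budget matches exactly one individual threshold, so the inequality is immediate; this is why recovering \textbf{Algorithm 1} and \textbf{Algorithm 2} is unproblematic. For $\lambda\in(0,1)$, however, the threshold in \eqref{eq:Com_Alg_Trfunc} is the \emph{sum} of the two individual thresholds, which is strictly larger than the mediant threshold that actually certifies $\dot V_\lambda\le 0$; substituting it back leaves residual positive cross terms (of the form $\sigma_i\,d_i^{out}$ times an off-diagonal product of the two budgets). I would resolve this either by (i) tightening the combined trigger so that the enforced bound is the mediant
\[ e_i^2(t) \leq \sigma_i\,\frac{\tfrac{\lambda}{4}\phi_i^{(1)} + (1-\lambda)\delta_i\phi_i^{(2)}}{\lambda d_i^{out} + (1-\lambda)\tfrac{d_i^{out}(b_i+c_i)}{2b_ic_i}}, \]
after which the displayed corollary holds verbatim, or by (ii) keeping the residual cross terms and absorbing them into $A_\lambda$ via a smaller $\sigma_i$ together with the eigenvalue slack in \eqref{eq:eigen}. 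Verifying that $A_\lambda$ remains strictly negative after such an absorption is the one genuinely delicate estimate; everything downstream is the routine reuse of the machinery already developed for $V_2$.
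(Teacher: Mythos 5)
Your route is genuinely different from the paper's, and the difference matters. The paper's entire proof of this corollary is precisely the convexity argument you rejected at the outset: exponential decay of $V_1$ (from the analysis of \textbf{Algorithm 1} in \citet{nowzari2016distributed}) and of $V_2$ (Theorem~\ref{them:Convergence}) are invoked, and the decay of $V_\lambda=\lambda V_1+(1-\lambda)V_2$ is concluded by linearity. Your objection to this is correct: those two decay estimates were established along executions of \textbf{Algorithm 1} and \textbf{Algorithm 2} respectively, i.e.\ under the individual error budgets \eqref{eq:Alg1_error_bound} and \eqref{eq:Alg2_error_bound}, whereas an execution of the \textbf{Combined Algorithm} enforces only the mixed budget \eqref{eq:Com_Alg_Trfunc}, which implies neither individual budget; so neither premise of the paper's syllogism is available along the trajectories in question. (The algebra of the paper's step is fine once the premises hold, since $\lambda V_1(0)e^{A_1t}+(1-\lambda)V_2(0)e^{A_2t}\le V_\lambda(0)e^{\max(A_1,A_2)t}$; the gap is solely, but fatally, in the premises.) Your plan --- establish a single differential inequality $\dot V_\lambda\le A_\lambda V_\lambda$ along Combined executions by reusing \eqref{eq:V1_dot_upbound}, \eqref{eq:V2_dot_upbound}, the spectral bounds \eqref{eq:eigen}, and a Poincar\'e-type bound $\|x-\bar x\|^2\le x^TL^Tx/\lambda_2(L)$ for the $V_1$ part --- is what a complete proof of this corollary has to look like, and it is the natural extension of Theorem~\ref{them:Convergence} that the paper does not carry out.

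Two points in your own plan need repair, however. First, your blanket claim that the summed threshold in \eqref{eq:Com_Alg_Trfunc} is \emph{strictly larger} than the mediant threshold is not true in general: writing the per-agent budgets as $N_i^{(1)}/D_i^{(1)}$ (with $N_i^{(1)}=\tfrac14\phi_i^{(1)}$, $D_i^{(1)}=d_i^{out}$) and $N_i^{(2)}/D_i^{(2)}$ (with $N_i^{(2)}=\delta_i\phi_i^{(2)}$, $D_i^{(2)}=\tfrac{d_i^{out}(b_i+c_i)}{2b_ic_i}$), the weighted mean of the two ratios minus their mediant is proportional to $\bigl(D_i^{(2)}-D_i^{(1)}\bigr)\bigl(N_i^{(1)}/D_i^{(1)}-N_i^{(2)}/D_i^{(2)}\bigr)$, which changes sign with the broadcast states. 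What is true --- and is all your argument needs --- is that the summed threshold is \emph{not dominated} by the mediant: for instance, when $\sum_{j\in\mathcal{N}_i^{out}}w_{ij}(\hat x_i-\hat x_j)=0$ but the pairwise differences are nonzero, the ratio of the summed threshold to the mediant equals $\lambda+(1-\lambda)\tfrac12(1/b_i+1/c_i)$, which exceeds $\lambda+(1-\lambda)d_i^{out}$ and hence can exceed $1/\sigma_i$, so the trigger as defined genuinely permits errors beyond what certifies per-agent negativity of $\dot V_\lambda$. Second, of your two fixes, (i) replacing the trigger by the mediant yields your displayed combined corollary verbatim and the rest goes through, but it proves the statement for a \emph{modified} algorithm, not for \eqref{eq:Com_Alg_Trfunc}--\eqref{eq:Com_Alg_event2} as the paper defines them; while (ii), which is the fix that would vindicate the paper's algorithm as written, is exactly the estimate you leave unverified. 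So as it stands your proposal either establishes a variant of the corollary or leaves its critical step pending --- which is still more than the paper provides, since the paper's own argument does not close this gap at all.
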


To illustrate the correctness and effectiveness of \textbf{Algorithm 2} and the \textbf{Combined Algorithm}, we introduce the fundamental clock synchronization problem that exists in wireless sensor networks (WSNs) as a case study. 

\section{Case Study: Clock Synchronization}
\label{sec:clock_sync}
\subsection{Background}
\label{subsec:background}
WSNs are broadly applied in areas such as disaster management, border protection, and security surveillance, to name a few, thanks to their low-cost and collaborative nature~\citep{abbasi2007survey, gungor2010opportunities}. However, the underlying local clocks of these sensors are often in disagreement due to the imperfections of clock oscillators. To guarantee consistency in the collected data, it is crucial to synchronize these clocks with high precision. In addition, as the small micro-processors embedded in each sensor node are usually resource-limited~\citep{gungor2010opportunities}, energy-efficient communication protocols for clock synchronization are therefore desired. 

Quite a lot approaches have been proposed to solve this problem, ranging from centralized to distributed, time-triggered to event-triggered, see~\citep{maroti2004flooding, solis2006new, simeone2007distributed, choi2010distributed, carli2014network, chen2015event, kadowaki2015event, garcia2017event} and references therein. To solve this fundamental problem, we propose to apply our event-triggered algorithms, i.e., \textbf{Algorithm 2} and the \textbf{Combined Algorithm} in this practical case. One of the most related works is done by~\citet{chen2015event}, where an event-triggered algorithm with state-dependent triggers is proposed. However, the virtual clocks they synchronize are formed in a discrete manner, which may encounter abrupt changes. The ability of avoiding abrupt changes is essential in clock synchronization since time discontinuity due to these changes can cause serious faults such as missing important events~\citep{sundararaman2005clock}. While another event-triggered algorithm proposed by~\citet{garcia2017event} does synchronize continuous-time virtual clocks, however, their time-dependent trigger design requires global information. Motivated by these two works, we introduce our state-dependent event-triggered algorithms that synchronize continuous-time virtual clocks. 

\subsection{Clock synchronization problem formulation}
\label{subsec:clk_syn}
Consider an $N$-sensor WSN whose topology is described by a strongly-connected weight-balanced underlying digraph~$\mathcal{G}=\{\mathcal{V,E},W\}$, with $\mathcal{V,\;E},\;W$ defined as in Section \ref{sec:preli}. Without loss of generality, we say that a sensor $i$ is able to receive information from its neighbors in $\mathcal{N}_i^{out}$ and send information to neighbors in $\mathcal{N}_i^{in}$. Each sensor in the network is equipped with a microprocessor with an underlying local clock $l_i(t)$, which is a function of the absolute time $t\in \mathbb{R}_{\geq 0}$. Ideally, the local clocks should be configured as $l_i(t)=t$ so that the notion of time is consistent throughout the system. In reality~\citep{kadowaki2015event}, however, they are in the form of
\begin{equation}
\label{eq:local clock}
\textstyle l_i(t)=\gamma_i t+o_i,\; i=1,\dots,N,
\end{equation}
where the unknown constants $\gamma_i\in \mathbb{R}_{>0}$ and $o_i\in \mathbb{R}$ represent the clock drift and offset of $i$-th clock, respectively.

As the absolute time $t$ is not available, the clock drift $\gamma_i$ and offset $o_i$ can not be computed directly. To synchronize the system, here we mean to synchronize the virtual clocks $T_i(t)$ of all sensors defined by~\citep{kadowaki2015event}
\begin{equation}
\label{eq:virtual_clock}
\textstyle  T_i(t)=\alpha_i(l_i(t))l_i(t),\; i=1,\dots,N,
\end{equation}
where $\alpha_i(l_i(t))$ is the controlled drift and is a function of node $i$'s local time $l_i(t)$.

The clock synchronization is said to be achieved if 
\begin{equation}
\label{eq:virtual_clock_synchr}
\textstyle \lim_{t\rightarrow \infty}|T_i(t)-T_j(t)|=0,\; \forall\; i,j\in \{1,\dots,N\}. 
\end{equation} 

For simple implementation, in this paper we consider the particular case where only clock drift is present, i.e., the clock offset $o_i=0$ for $i=1, \dots, N$. We also assume $\gamma_i\in [1-\epsilon_\gamma,1+\epsilon_\gamma]$, where $\epsilon_\gamma$ is known. The local clocks are then given by
\begin{equation}
\label{eq:local_no_drift}
\textstyle l_i(t)=\gamma_i t,\; i=1,\dots,N.
\end{equation}
Substitute~\eqref{eq:local_no_drift} into~\eqref{eq:virtual_clock} gives the expressions of virtual clocks
\begin{equation}
\label{eq:virtual_drift_sub}
\textstyle T_i(t)=\gamma_i\alpha_i(l_i(t))t,\; i=1,\dots,N.
\end{equation}

Note that the virtual clocks are continuous by definition, therefore the abrupt changes on the clocks are avoided. 

The dynamics of $\alpha_i(l_i(t))$ is specified by
\begin{equation}
\label{eq:dynamic_drift}
\textstyle \frac{{\rm d} \alpha_i(l_i(t))}{{\rm d} l_i(t)}=-\sum_{j\in\mathcal{N}_i^{out}}w_{ij}(\hat{\alpha}_i(l_i(t))-\frac{\gamma_j}{\gamma_i}\hat{\alpha}_j(l_j(t))),
\end{equation}
where $\hat{\alpha}_i(l_i(t))$, $\hat{\alpha}_j(l_j(t))$ represent the last broadcast state values of sensor $i$ and $j$ at their local time $l_i$ and $l_j$, respectively. Though $\gamma_i$ and $\gamma_j$ can not be computed directly, the value of $\frac{\gamma_j}{\gamma_i}$ can be obtained as follows~\citep{garcia2017event}: record the local time of node $i$ and node $j$ when node $i$ receives information from node $j$ at two time points, say $t_m$ and $t_n$, then $\frac{a_j}{a_i}$ can be computed using $\frac{\gamma_j}{\gamma_i}=\frac{l_j(t_m)-l_j(t_n)}{l_i(t_m)-l_i(t_n)}$. Note we only need the local clock time, not the exact values of $t_m$ and $t_n$.

Define $e_i(l_i(t))=\hat{\alpha}_i(l_i(t))-\alpha_i(l_i(t))$ as sensor $i$'s state error, where $\alpha_i(l_i(t))$ is its current controlled drift. An event for sensor $i$ is triggered as soon as the triggering function
\begin{equation}
\label{eq:clk_trig_func}
\textstyle f_i(l_i(t),\alpha_i(l_i(t)),\hat{\alpha}_i(l_i(t)),\hat{\alpha}_j(l_j(t)) )>0
\end{equation}
is fulfilled. The triggered event then drives sensor $i$ to broadcast its current state $\alpha_i(l_i(t))$ to its neighbors so that they can update their states accordingly. Our objective is to apply \textbf{Algorithm 2} and the \textbf{Combined Algorithm} so as to design triggering functions \eqref{eq:clk_trig_func} for each sensor with its locally available information so that the virtual clocks are synchronized, i.e., \eqref{eq:virtual_clock_synchr} is satisfied.
 
\subsection{Distributed event-triggered clock synchronization algorithms}
\label{subsec:dis_clk_syn_alg}
The event-triggered algorithms for clock synchronization are developed based on Lyapunov functions. To begin, let us first rewrite \eqref{eq:virtual_drift_sub} as
\begin{equation}
\textstyle  T_i(t)=\gamma_i\alpha_i(l_i(t))t=y_i(t)t, 
\end{equation}
where $y_i(t)=\gamma_i\alpha_i(l_i(t))$ is called the modified drift. It is clear that once consensus is achieved on the variables $y_i(t)$, the clock synchronization will be realized regardless of the individual values of $\gamma_i$ and $\alpha_i(l_i(t))$. 

We then adopt the Lyapunov candidate functions proposed in Section \ref{sec:triggerdesign}, with the modified drifts as variables, i.e., $V_1(y(t))=\frac{1}{2}(y(t)-\bar{y})^T(y(t)-\bar{y})$, $V_2(y(t))=\frac{1}{2}y(t)^TL^Ty(t)$, and $V_\lambda(y(t)) = \lambda V_1(y(t))+(1-\lambda)V_2(y(t))$. As the algorithm development with different Lyapunov functions are similar, we only use $V_2(y(t))=\frac{1}{2}y(t)^TL^Ty(t)$ as an example to illustrate the derivation process. 

The dynamics of the modified drift $y_i(t)$ is derived as follows:
\begin{equation}
\label{eq:dot_x}
\textstyle \begin{split}
\dot{y_i}(t)=\frac{{\rm d} \alpha_i(l_i(t))}{{\rm d} l_i(t)}\cdot\frac{{\rm d}l_i(t)}{{\rm d} t}&=\gamma_i^2\Big(-\sum_{j\in\mathcal{N}_i^{out}}w_{ij}(\hat{\alpha}_i(l_i(t))-\frac{\gamma_j}{\gamma_i}\hat{\alpha}_j(l_j(t)))\Big)\\
&=-\gamma_i\Big(\sum_{j\in\mathcal{N}_i^{out}}w_{ij}(\gamma_i\hat{\alpha}_i(l_i(t))-\gamma_j\hat{\alpha}_j(l_j(t))\Big)\\
&=-\gamma_i\sum_{j\in\mathcal{N}_i^{out}}w_{ij}\Big(\hat{y}_i(t)-\hat{y}_j(t)\Big).
\end{split}
\end{equation} 

We then specify the following Lemma to upper bound the derivatives of $V_2(y(t))$. 


\begin{lemma}
\label{lemma:clk_V2}
In clock synchronization, for $i \in\mathcal{V}$, let $b_i,c_j>0$ for all $i,j \in \mathcal{V}$ (the same $b_i, c_j$ as in Lemma \ref{lemma:V2_dot_upbound},  define $\nu_i(t)= \sum_{j\in\mathcal{N}_i^{out}}w_{ij}(\hat{y}_i(t)-\hat{y}_j(t))$, and $e_{yi}(t)=\hat{y}_i(t)-y_i(t)$, then the derivative of $V_2(y(t))=\frac{1}{2}y(t)^TL^Ty(t)$ is upper bounded by
\begin{equation}
\label{eq:clk_V2_dot_bound}
\textstyle \dot{V}_2(y(t))\leq -\sum_{i=1}^N \gamma_i\left[\delta_i \Big(\sum_{j\in\mathcal{N}_i^{out}}w_{ij}(\hat{y}_i(t)-\hat{y}_j(t))\Big)^2-\left(\frac{d_i^{out}}{2b_i}+\frac{d_i^{out}}{2c_i}\right)e_{yi}^2(t) \right],
\end{equation}
where $\delta_i$ is what defined in \eqref{eq:delta}.
\end{lemma}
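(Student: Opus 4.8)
The plan is to mirror the derivation of Lemma~\ref{lemma:V2_dot_upbound} (the drift-free case), carrying the per-agent clock drift $\gamma_i$ through every step. First I would differentiate $V_2(y)=\frac{1}{2}y^TL^Ty$ exactly as in Appendix~A and substitute the modified-drift dynamics~\eqref{eq:dot_x}, which in compact form reads $\dot{y}=-\Gamma L\hat{y}$ with $\Gamma=\mathrm{diag}(\gamma_1,\dots,\gamma_N)$ and $\nu_i=(L\hat{y})_i$. Because $\Gamma$ enters between $L^T$ and $L\hat{y}$, this yields $\dot{V}_2(y)=-\sum_{i=1}^N\gamma_i(Ly)_i\,\nu_i$, i.e. each agent's contribution is automatically premultiplied by its own $\gamma_i$ --- precisely the factor sitting in front of the bracket in~\eqref{eq:clk_V2_dot_bound}.

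Next, writing $y_i=\hat{y}_i-e_{yi}$ so that $(Ly)_i=\nu_i-(Le_y)_i$ with $(Le_y)_i=d_i^{out}e_{yi}-\sum_{j\in\mathcal{N}_i^{out}}w_{ij}e_{yj}$, I would expand the product to split $\dot{V}_2(y)$ into a leading dissipation term $-\sum_i\gamma_i\nu_i^2$ and two indefinite cross terms, $\sum_i\gamma_i d_i^{out}e_{yi}\nu_i$ and $-\sum_i\gamma_i\nu_i\sum_{j}w_{ij}e_{yj}$. I would then bound each cross term by Young's inequality~\eqref{eq:youngs}, applied edge-wise: parameter $b_i$ on the first (diagonal) term produces $\frac{\gamma_i d_i^{out}}{2b_i}e_{yi}^2$ together with $\frac{\gamma_i d_i^{out}b_i}{2}\nu_i^2$, and parameter $c_j$ on the second produces $\frac{\gamma_i w_{ij}c_j}{2}\nu_i^2$ together with $\frac{\gamma_i w_{ij}}{2c_j}e_{yj}^2$. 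Collecting the coefficients of $\nu_i^2$ reconstructs exactly $\gamma_i\delta_i$ with $\delta_i$ as in~\eqref{eq:delta}, so the dissipation part of~\eqref{eq:clk_V2_dot_bound} emerges immediately.

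The hard part will be re-summing the off-diagonal error contribution $\sum_{i}\sum_{j}\frac{\gamma_i w_{ij}}{2c_j}e_{yj}^2$ into the claimed per-agent coefficient $\gamma_i\big(\frac{d_i^{out}}{2b_i}+\frac{d_i^{out}}{2c_i}\big)e_{yi}^2$. In the drift-free setting this is exactly where weight-balancedness is used: $\sum_i w_{ij}=d_j^{out}$ turns $\sum_{i,j}\frac{w_{ij}}{2c_j}e_j^2$ cleanly into $\sum_j\frac{d_j^{out}}{2c_j}e_j^2$. Here, however, the drift $\gamma_i$ sits inside the sum, so the same manipulation requires moving $\gamma_i$ through the Laplacian's column sum, i.e. a drift-weighted balance identity of the form $\sum_i\gamma_i w_{ij}=\gamma_j d_j^{out}$. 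Making this reindexing rigorous --- by invoking such an identity or otherwise reconciling the discrepancy between $\sum_i\gamma_i w_{ij}$ and $\gamma_j d_j^{out}$ --- is the one genuinely nontrivial step; once it is in place, relabeling $j\to i$ and regrouping gives the stated bound~\eqref{eq:clk_V2_dot_bound} verbatim, with everything else a routine repetition of the \textbf{Algorithm 2} computation.
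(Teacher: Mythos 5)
Your setup and bookkeeping faithfully reproduce what the paper intends (the paper omits this proof entirely, asserting only that it is ``similar to the proof for Lemma \ref{lemma:V2_dot_upbound}''), and everything you do is correct up to the step you flag: the compact dynamics $\dot{y}=-\Gamma L\hat{y}$, the splitting into $-\sum_i\gamma_i\nu_i^2$ plus two cross terms, and the edge-wise Young bounds that reassemble $\gamma_i\delta_i$ as the coefficient of $\nu_i^2$. However, the step you flag is not merely ``nontrivial'' --- it is a genuine gap that cannot be closed, because the drift-weighted balance identity you would need is false. After swapping the order of summation, the off-diagonal error contribution is
\begin{equation*}
\textstyle \sum_{i=1}^N\gamma_i\sum_{j\in\mathcal{N}_i^{out}}\frac{w_{ij}}{2c_j}e_{yj}^2
=\sum_{j=1}^N\frac{e_{yj}^2}{2c_j}\sum_{i\in\mathcal{N}_j^{in}}\gamma_i w_{ij},
\end{equation*}
and matching the lemma's claimed coefficient $\gamma_j\bigl(\frac{d_j^{out}}{2c_j}\bigr)e_{yj}^2$ would require $\sum_{i\in\mathcal{N}_j^{in}}\gamma_i w_{ij}\leq \gamma_j d_j^{out}=\gamma_j d_j^{in}$, i.e.\ $\sum_{i\in\mathcal{N}_j^{in}}(\gamma_i-\gamma_j)w_{ij}\leq 0$, for every $j$. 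Weight-balancedness only gives the \emph{unweighted} column-sum identity $\sum_{i\in\mathcal{N}_j^{in}}w_{ij}=d_j^{out}$; once heterogeneous drifts sit inside the sum, the required inequality fails at any node whose in-neighbors have larger drifts on weighted average --- in particular at the node with the smallest drift in the network. Since the weights $e_{yj}^2/(2c_j)$ are arbitrary nonnegative quantities at a given instant, there is also no cancellation across $j$ that rescues the aggregate inequality.

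The consequence is stronger than an unfinished proof: the bound \eqref{eq:clk_V2_dot_bound} as stated does not follow from this route, and it is in fact violated in general. In a two-node example with $w_{12}=w_{21}=1$, $b_i=c_i=b$, $e_{y2}=0$, the claimed inequality reduces to $(\gamma_1+\gamma_2)e_{y1}\nu_1\leq(\gamma_1+\gamma_2)b\nu_1^2+\frac{\gamma_1}{b}e_{y1}^2$, which holds for all $(e_{y1},\nu_1)$ only if $\gamma_2\leq 3\gamma_1$ --- so for sufficiently heterogeneous drifts the lemma's pointwise bound fails. What your computation \emph{does} deliver is the correct weaker bound obtained by estimating $\sum_{i\in\mathcal{N}_j^{in}}\gamma_i w_{ij}\leq\gamma_{\max}d_j^{out}$ with $\gamma_{\max}=\max_i\gamma_i\leq 1+\epsilon_\gamma$, namely
\begin{equation*}
\textstyle \dot{V}_2(y(t))\leq-\sum_{i=1}^N\left[\gamma_i\delta_i\nu_i^2(t)-\gamma_{\max}\left(\frac{d_i^{out}}{2b_i}+\frac{d_i^{out}}{2c_i}\right)e_{yi}^2(t)\right],
\end{equation*}
which still supports the subsequent trigger design after adjusting constants (using $\gamma_i\geq 1-\epsilon_\gamma$ on the dissipation term). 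So your proposal is incomplete exactly where you suspected, and in locating that obstruction you have also identified a real flaw that the paper's ``proof by similarity'' glosses over: the one place Appendix A uses weight-balancedness is precisely the place where the drift factors break the argument.
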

The proof is similar to the proof for Lemma \ref{lemma:V2_dot_upbound} and is omitted due to space limit. 

From Lemma \ref{lemma:clk_V2}, we can see that as long as $\dot{V}_2(y(t))<0$ and $\|Ly(t)\|\neq 0$ hold, $y_i(t)$ achieves consensus, meaning $\lim_{t\rightarrow \infty}|y_i(t)-y_j(t)|=0$. Recall that $T_i(t)=y_i(t)t$, therefore, $\lim_{t\rightarrow \infty}|T_i(t)-T_j(t)|=0$, proving that the synchronization on virtual clocks can be achieved. 


A sufficient condition to ensure that $V_2(y(t))$ is monotonically decreasing is 
\begin{equation}
\label{eq:clk_Alg2_errbound}
\begin{split}
e_{yi}^2(t) &\leq \frac{\delta_i}{(\frac{d_i^{out}}{2b_i}+\frac{d_i^{out}}{2c_i})}\Big(\sum_{j\in\mathcal{N}_i^{out}}w_{ij}(\hat{y}_i(t)-\hat{y}_j(t))\Big)^2\\
&=\frac{2b_ic_i\delta_i}{(b_i+c_i)d_i^{out}}\Big(\sum_{j\in\mathcal{N}_i^{out}}w_{ij}(\gamma_i\hat{\alpha}_i(l_i(t))-\gamma_j\hat{\alpha}_j(l_i(t)))\Big)^2\\
&=\frac{2\gamma_i^2b_ic_i\delta_i}{(b_i+c_i)d_i^{out}}\Big(\sum_{j\in\mathcal{N}_i^{out}}w_{ij}(\hat{\alpha}_i(l_i(t))-\frac{\gamma_j}{\gamma_i}\hat{\alpha}_j(l_j(t)))\Big)^2.\\
\end{split}
\end{equation}

With $e_{yi}(t)=\gamma_ie_i(l_i(t))$, we define the triggering function developed from \textbf{Algorithm 2} as
\begin{equation}
\label{eq:clk_Alg2_triggFunc}
f_i(e_i(l_i(t)))=e_i^2(l_i(t))-\frac{2b_ic_i\delta_i}{(b_i+c_i)d_i^{out}}\Big(\sum_{j\in\mathcal{N}_i^{out}}w_{ij}(\hat{\alpha}_i(l_i(t))-\frac{\gamma_j}{\gamma_i}\hat{\alpha}_j(l_j(t)))\Big)^2.
\end{equation}

To ensure no triggers are missed by sensor $i$, we define an event either by 
\begin{align}
f_i(e_i(l_i(t)))&>0 \quad or\label{eq:clk_Alg2_event1}\\
f_i(e_i(l_i(t)))=0 \quad and \quad &\sum_{j\in\mathcal{N}_i^{out}}w_{ij}(\hat{\alpha}_i(l_i(t))-\frac{\gamma_j}{\gamma_i}\hat{\alpha}_j(l_j(t))) \neq 0\label{eq:clk_Alg2_event2}.
\end{align}

Similarly, an additional trigger is prescribed to address the non-Zeno behavior. Let $l_i^{last}$ be the last time at which sensor $i$ broadcasts its information to its neighbors. If at some time $l_i(t)\geq l_i^{last}$, sensor $i$ receives information from a neighbor $j\in \mathcal{N}_i^{out}$, then it immediately broadcasts its state if
\begin{equation}
\label{eq:clk_Alg2_inter-event}
l_i(t)\in (l_i^{last}, l_i^{last}+\varepsilon_i'),
\end{equation}
where
\begin{equation}
\label{eq:clk_Alg2_varep}
\varepsilon_i' < \sqrt{\frac{2\sigma_ib_ic_i\delta_i}{(b_i+c_i)d_i^{out}}}
\end{equation}
whose design is as given in Proposition \ref{prop:non-Zeno}.

The following result presents \textbf{Algorithm 2} in the clock synchronization application. 
\begin{them}
\label{them:clk_Alg2_sync}
For an $N$-sensor network over a weight-balanced digraph, assume only clock drift exists, i.e.,\;$o_i = 0,\; \forall i\in \mathcal{V}$. With the virtual clocks~\eqref{eq:virtual_clock}, dynamics given in~\eqref{eq:dynamic_drift}, the distributed event-triggered consensus algorithm~\eqref{eq:clk_Alg2_triggFunc}-\eqref{eq:clk_Alg2_varep} (\textbf{Algorithm 2}) achieves asymptotic synchronization for the virtual clocks, i.e.,~\eqref{eq:virtual_clock_synchr} is satisfied.
\end{them}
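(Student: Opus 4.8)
The plan is to transport the analysis of the nominal problem---Proposition~\ref{prop:non-Zeno} and Theorem~\ref{them:Convergence}---into the local-clock time variable, and then to turn consensus of the modified drifts $y_i$ into synchronization of the virtual clocks $T_i(t)=y_i(t)\,t$. First I would check that the events~\eqref{eq:clk_Alg2_event1}--\eqref{eq:clk_Alg2_event2} keep $f_i(e_i(l_i(t)))\le 0$ at all times, so that the error obeys~\eqref{eq:clk_Alg2_errbound}. Using $e_{yi}(t)=\gamma_i e_i(l_i(t))$ together with the identity $\nu_i(t)=\gamma_i\sum_{j}w_{ij}(\hat\alpha_i-\tfrac{\gamma_j}{\gamma_i}\hat\alpha_j)$, the scalar trigger~\eqref{eq:clk_Alg2_triggFunc} is exactly the bound that makes every bracketed term of Lemma~\ref{lemma:clk_V2} nonnegative; since each $\gamma_i>0$, substituting it into~\eqref{eq:clk_V2_dot_bound} gives $\dot V_2(y(t))\le 0$, with a strict margin governed by the design parameter $\sigma_i\in(0,1)$, exactly as in Corollary~\ref{coro:Alg2_coro}.

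Before drawing any asymptotic conclusion I would confirm that the closed loop is well posed, i.e.\ that solutions exist for all time, which is the non-Zeno property. This follows by replaying the two-step argument of Proposition~\ref{prop:non-Zeno} verbatim, now carried out in each node's local time $l_i$: after a broadcast the error grows affinely, so the self-triggered inter-event time is lower bounded by the right-hand side of~\eqref{eq:clk_Alg2_varep}, while the auxiliary trigger~\eqref{eq:clk_Alg2_inter-event} prevents accumulation of synchronized re-broadcasts in any finite window for the finite network.

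Next I would establish $V_2(y(t))\to 0$. Because $V_2(y)=\tfrac12 y^TL^Ty=\tfrac12 y^T\tfrac{L+L^T}{2}y$ is positive semidefinite with kernel exactly $\mathrm{span}(\mathbf 1_N)$ for a strongly connected weight-balanced digraph, $V_2(y)=0$ is equivalent to $y_i=y_j$ for all $i,j$. I would then reproduce the chain of estimates in Theorem~\ref{them:Convergence}: bound $V_2(y)$ from above by a multiple of $\hat y^TL^T\hat y$ using~\eqref{eq:clk_Alg2_errbound} and the eigenvalue inequality~\eqref{eq:eigen}, and bound $\dot V_2(y)$ from above by a negative multiple of $\hat y^TL^T\hat y$ (the factors $\gamma_i$, being positive and bounded away from $0$ on $[1-\epsilon_\gamma,1+\epsilon_\gamma]$, only rescale the constants and do not change the sign). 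This yields $\dot V_2(y(t))\le A'V_2(y(t))$ for some $A'<0$, hence $V_2(y(t))\le V_2(y(0))e^{A't}$, so the disagreement decays exponentially.

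The step I expect to be the crux is the final passage from consensus of $y$ to synchronization of the clocks. Since $T_i(t)-T_j(t)=(y_i(t)-y_j(t))\,t$, the mere statement $y_i-y_j\to 0$ is \emph{not} sufficient: the factor $t$ can amplify a slowly vanishing disagreement. Here the exponential estimate is essential. Writing $y_\perp$ for the component of $y$ orthogonal to $\mathbf 1_N$, the spectral gap gives $\|y_\perp(t)\|^2\le \tfrac{2}{\lambda_2}V_2(y(t))\le C e^{A't}$, and $|y_i(t)-y_j(t)|\le 2\|y_\perp(t)\|$. Hence $|T_i(t)-T_j(t)|\le 2t\,\|y_\perp(t)\|\le C' t\,e^{A't/2}\to 0$, because exponential decay dominates the linear factor $t$. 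This establishes~\eqref{eq:virtual_clock_synchr}, and it also explains why exponential---rather than merely asymptotic---convergence of the modified drifts is the property that actually must be proved.
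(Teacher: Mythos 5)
Your proposal is correct, and its first three steps follow essentially the route the paper intends: the paper offers no standalone proof of Theorem~\ref{them:clk_Alg2_sync}, instead leaning on Lemma~\ref{lemma:clk_V2} and on the claim that the non-Zeno and convergence arguments of Section~\ref{sec:analysis} (Proposition~\ref{prop:non-Zeno} and Theorem~\ref{them:Convergence}) carry over once everything is rewritten in terms of the modified drifts $y_i(t)=\gamma_i\alpha_i(l_i(t))$ and local-clock time, which is exactly what you do, including the observation that the factors $\gamma_i\in[1-\epsilon_\gamma,1+\epsilon_\gamma]$ only rescale constants. The genuine difference lies in your last step, and it is an improvement rather than a deviation. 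The paper's own justification is the remark following Lemma~\ref{lemma:clk_V2}: $y_i(t)$ achieves consensus and, since $T_i(t)=y_i(t)t$, therefore $\lim_{t\rightarrow\infty}|T_i(t)-T_j(t)|=0$. As you point out, that inference is invalid as stated: $T_i(t)-T_j(t)=(y_i(t)-y_j(t))\,t$, so a disagreement decaying like $1/\sqrt{t}$ would satisfy consensus of the $y_i$ while the virtual clocks drift apart; monotone decrease of $V_2(y)$ alone does not close this. Your argument --- exponential decay $V_2(y(t))\le V_2(y(0))e^{A't}$ with $A'<0$, the spectral-gap bound $\|y_\perp(t)\|^2\le\frac{2}{\lambda_2}V_2(y(t))$ (with $\lambda_2$ the second-smallest eigenvalue of the symmetric part of $L$, whose kernel is $\mathrm{span}(\mathbf{1}_N)$ by strong connectivity and weight balance), and hence $|T_i(t)-T_j(t)|\le C'\,t\,e^{A't/2}\rightarrow 0$ --- supplies precisely the missing quantitative step, and it explains why the exponential rate, not mere asymptotic convergence of the drifts, is the property that the theorem actually requires. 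In short, your proof is a faithful transport of the paper's machinery that additionally repairs a real gap in the paper's published reasoning.
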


We haven shown that \textbf{Algorithm 2} can be applied to the practical clock synchronization problem. Next, we show that the \textbf{Combined Algorithm} can also be applied to solve the clock synchronization problem. To do so, we first derive the triggering law for the clock synchronization problem from \textbf{Algorithm 1} as
\begin{equation}
\label{eq:clk_Alg1_triggFunc}
f_i(e_i(l_i(t)))= e_i^2(l_i(t))-\frac{\sigma_i}{4d_i^{out}}\sum_{j\in \mathcal{N}_i^{out}}w_{ij}(\hat{\alpha}_i(l_i(t))-\frac{\gamma_j}{\gamma_i}\hat{\alpha}_j(l_j(t)))^2,
\end{equation}
with an inter-event period bounded by $\varepsilon_i' < \sqrt{\frac{\sigma_i}{4d_i^{out}w_i^{\max}|\mathcal{N}_i^{out}|}}$.

Then, with the triggering rules \eqref{eq:clk_Alg2_triggFunc} - \eqref{eq:clk_Alg1_triggFunc} and the analysis in Section \ref{sec:combined}, designing the triggering function for the clock synchronization problem from the \textbf{Combined Algorithm} is straightforward. That is, 
\begin{equation}
\label{eq:clk_ComAlg_triggFunc}
\begin{split}
f_i(e_i(l_i(t)))= e_i^2(l_i(t))&-\frac{\lambda\sigma_i}{4d_i^{out}}\sum_{j\in \mathcal{N}_i^{out}}w_{ij}(\hat{\alpha}_i(l_i(t))-\frac{\gamma_j}{\gamma_i}\hat{\alpha}_j(l_j(t)))^2\\
&-\frac{2(1-\lambda)\sigma_i\delta_ib_ic_i}{(b_i+c_i)d_i^{out}}\Big(\sum_{j\in\mathcal{N}_i^{out}}w_{ij}(\hat{\alpha}_i(l_i)-\frac{\gamma_j}{\gamma_i}\hat{\alpha}_j(l_j))\Big)^2 ,
\end{split}
\end{equation}
with an inter-event period bounded by $\varepsilon_i' < \sqrt{\frac{\lambda\sigma_i}{4d_i^{out}w_i^{\max}|\mathcal{N}_i^{out}|}+\frac{2(1-\lambda)\sigma_i \delta_ib_ic_i}{(b_i+c_i)d_i^{out}}}$.

\begin{them}
\label{them:clk_Comb_sync}
For an $N$-sensor network over a weight-balanced digraph, assume only clock drift exists, i.e.,\;$o_i = 0,\; \forall i\in \mathcal{V}$. With the virtual clocks~\eqref{eq:virtual_clock}, dynamics given in~\eqref{eq:dynamic_drift}, the distributed event trigging rule defined in~\eqref{eq:clk_ComAlg_triggFunc}, then the \textbf{Combined Algorithm} achieves asymptotic synchronization for the virtual clocks when the triggering condition $f_i(e_i(l_i(t)))>0$ or $f_i(e_i(l_i(t)))=0$ with $ e_i^2(l_i(t))>0$ is met.
\end{them}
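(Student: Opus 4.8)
The plan is to reuse the Lyapunov certificate behind Theorem~\ref{them:clk_Alg2_sync}, but applied to the combined candidate \(V_\lambda(y(t))=\lambda V_1(y(t))+(1-\lambda)V_2(y(t))\), and to show that the combined triggering law~\eqref{eq:clk_ComAlg_triggFunc} keeps \(V_\lambda\) nonincreasing while forcing its decrease to certify consensus of the modified drifts \(y_i(t)=\gamma_i\alpha_i(l_i(t))\). Since \(T_i(t)=y_i(t)t\), consensus on \(y\) immediately yields \(\lim_{t\to\infty}|T_i(t)-T_j(t)|=0\), i.e.~\eqref{eq:virtual_clock_synchr}.

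First I would differentiate and split \(\dot V_\lambda(y)=\lambda\dot V_1(y)+(1-\lambda)\dot V_2(y)\). The second term is handled verbatim by Lemma~\ref{lemma:clk_V2}, which already absorbs the drift factors \(\gamma_i\) entering through \(\dot y_i=-\gamma_i\sum_{j\in\mathcal N_i^{out}}w_{ij}(\hat y_i-\hat y_j)\) as benign positive weights. For the first term I would derive the clock analogue of the Algorithm~1 bound~\eqref{eq:V1_dot_upbound}, again carrying the \(\gamma_i\) through the modified dynamics and using Young's inequality~\eqref{eq:youngs} to separate gradient and error contributions. Next I would substitute the error bound enforced by \(f_i(e_i(l_i(t)))\le 0\) in~\eqref{eq:clk_ComAlg_triggFunc}, namely \(e_i^2\) bounded by the convex combination of the two gradient-based bounds, into both derivative estimates. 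Combining with weights \(\lambda\) and \(1-\lambda\) should collapse the error terms and leave \(\dot V_\lambda(y)\le -\sum_{i=1}^N\gamma_i(1-\sigma_i)\big[\lambda\,\Phi_i+(1-\lambda)\,\delta_i\,\Psi_i\big]\le 0\), where \(\Phi_i=\tfrac{1}{4d_i^{out}}\sum_{j\in\mathcal N_i^{out}}w_{ij}(\hat y_i-\hat y_j)^2\) and \(\Psi_i=\big(\sum_{j\in\mathcal N_i^{out}}w_{ij}(\hat y_i-\hat y_j)\big)^2\) are the nonnegative per-agent gradient terms; this mirrors Corollary~\ref{coro:Alg2_coro} and vanishes only when every \(\sum_{j}w_{ij}(\hat y_i-\hat y_j)=0\).

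From there I would close the argument exactly as in Theorem~\ref{them:clk_Alg2_sync}: \(V_\lambda\ge 0\) is bounded below and nonincreasing, hence converges, and a LaSalle/Barbalat argument forces \(\sum_{j}w_{ij}(\hat y_i-\hat y_j)\to 0\) for all \(i\), i.e.\ \(L\hat y\to 0\). Because the triggering law also keeps \(e_{yi}^2\) below the same gradient quantities, the errors satisfy \(e_{yi}\to 0\) so that \(\hat y\to y\), giving \(Ly\to 0\); for a strongly connected, weight-balanced digraph this means \(y\in\mathrm{span}(\mathbf 1_N)\), i.e.\ the modified drifts reach consensus. Finally, the non-Zeno property is inherited from Proposition~\ref{prop:non-Zeno} with the combined inter-event bound, guaranteeing the execution is well defined for all \(t\) so that the asymptotic statement is meaningful.

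The main obstacle is the drift factors \(\gamma_i\). While they appear harmlessly in the \(V_2\) estimate of Lemma~\ref{lemma:clk_V2}, they break the clean cancellation that made \(\dot V_1\) a negative sum of squares in the driftless setting: here the plain average \(\mathbf 1_N^{T}y\) is \emph{not} conserved (only the weighted sum \(\sum_i y_i/\gamma_i\) is, since \(\mathbf 1_N^{T}L=\mathbf 0^{T}\)), so the cross term in \(\dot V_1(y)=-(y-\bar y)^T\Gamma L\hat y\) with \(\Gamma=\mathrm{diag}(\gamma_i)\) does not vanish and \(\Gamma L+L^{T}\Gamma\) need not be positive semidefinite. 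The delicate step is therefore to show the combined triggering still renders \(\dot V_\lambda\le 0\) with the correct sign; I would anchor this on the \(V_2\) component (weight \(1-\lambda>0\)), which measures distance to consensus directly, and arrange the \(V_1\) contribution so that its residual terms are dominated rather than relied upon to produce the certificate.
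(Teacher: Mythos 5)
Your overall architecture is the same as the one the paper intends (the paper in fact gives no argument for Theorem~\ref{them:clk_Comb_sync}; it simply defers to Sections~\ref{sec:analysis} and~\ref{sec:combined}), and your diagnosis of the drift obstacle is exactly right: under $\dot y=-\Gamma L\hat y$ with $\Gamma=\mathrm{diag}(\gamma_1,\dots,\gamma_N)$, the plain sum $\mathbf{1}_N^Ty$ is not conserved, and $\dot V_1(y)=-(y-\bar y)^T\Gamma L\hat y$ contains the term $\bar y^T\Gamma L\hat y\neq 0$ plus a quadratic form whose symmetric part $\tfrac{1}{2}(\Gamma L+L^T\Gamma)$ is in general indefinite. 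The genuine gap is that your proposed repair --- keeping this residual and ``dominating'' it with the $(1-\lambda)V_2$ decrease --- cannot work. The residual $\bar y^T\Gamma L\hat y$ is \emph{linear} in the disagreement $L\hat y$ (its coefficient vector is proportional to $\gamma$, and $\bar y\neq 0$ since the drifts are positive), whereas every negative quantity the triggers certify, whether $\Phi_i$ or $\Psi_i$, is \emph{quadratic} in $L\hat y$ or in the differences $\hat y_i-\hat y_j$. Near consensus a sign-indefinite first-order term dominates any second-order decrease, so $\dot V_\lambda$ can become positive no matter how the constants are arranged; moreover, for $\lambda$ close to $1$ the weight $(1-\lambda)$ you intend to lean on vanishes, while the theorem is claimed for all $\lambda\in[0,1]$. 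So the certificate $\dot V_\lambda\le 0$ cannot be salvaged by domination.

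The missing idea is to change the $V_1$-type function rather than to fight its residual. Since $\mathbf{1}_N^TL=\mathbf{0}^T$, the conserved quantity here is the weighted sum $\sum_i y_i/\gamma_i$ (as you yourself note), so the correct clock analogue of $V_1$ is the drift-weighted function $V_1^{\gamma}(y)=\tfrac{1}{2}(y-\bar y_\gamma)^T\Gamma^{-1}(y-\bar y_\gamma)$, where $\bar y_\gamma$ is the multiple of $\mathbf{1}_N$ fixed by the conserved weighted sum. Then $\dot V_1^{\gamma}=-(y-\bar y_\gamma)^T\Gamma^{-1}\Gamma L\hat y=-y^TL\hat y$, which is literally the driftless expression behind \eqref{eq:V1_dot_upbound} with $y$ in place of $x$; the Young's-inequality bound and the Algorithm~1 trigger carry over verbatim, and after dividing by $\gamma_i^2$ (using $e_{yi}=\gamma_i e_i(l_i)$ and $\hat y_i-\hat y_j=\gamma_i(\hat\alpha_i-\tfrac{\gamma_j}{\gamma_i}\hat\alpha_j)$) they reproduce exactly \eqref{eq:clk_Alg1_triggFunc}, hence the combined trigger \eqref{eq:clk_ComAlg_triggFunc}. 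With $V_\lambda^{\gamma}=\lambda V_1^{\gamma}+(1-\lambda)V_2$ the rest of your outline (combining with Lemma~\ref{lemma:clk_V2}, enforcing the error bound, an invariance argument giving $L\hat y\to 0$ and $e_y\to 0$, hence $y\to\mathrm{span}(\mathbf{1}_N)$, and non-Zeno via Proposition~\ref{prop:non-Zeno}) goes through. One further caution: even then your claimed collapsed bound $\dot V_\lambda\le-\sum_i\gamma_i(1-\sigma_i)[\lambda\Phi_i+(1-\lambda)\delta_i\Psi_i]$ is not automatic, because the error-term coefficients of the two bounds differ ($d_i^{out}$ for the $V_1^{\gamma}$ part versus $\gamma_i d_i^{out}(b_i+c_i)/(2b_ic_i)$ for the $V_2$ part), and a convex combination of the two error-to-gradient ratios is not bounded by the ratio of the combinations; one needs $\sigma_i$ sufficiently small or a mediant-type trigger. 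Both of these points are glossed over by the paper itself --- it states the unweighted $V_1(y)$ in Section~\ref{subsec:dis_clk_syn_alg} and omits the proof --- so you located a real hole, but your patch is not the one that closes it.
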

The proof of the theorem and the stability analysis, non-Zeno behavior exclusion are as given in Section \ref{sec:analysis}, therefore are omitted.

\section{Simulation Results}
\label{sec:sim}
In this section, we apply \textbf{Algorithm 1} and \textbf{Algorithm 2} to the event-triggered clock synchronization problem, to show the effectiveness of both algorithms. We then demonstrate the performance of the proposed algorithms through several simulations and show how either \textbf{Algorithm 1} or \textbf{Algorithm 2} could be argued to be `better' given different network topology, which has set the basis for our introduction of the \textbf{Combined Algorithm} to easily go between the two.

We first show that both \textbf{Algorithm 1} and \textbf{Algorithm 2} are able to synchronize the virtual clocks in WSNs. We consider four different network topologies, with their corresponding weighted adjacency matrices listed in Table~\ref{tab:topology}.  
\begin{table}
	\centering
	\caption{Four different networks.}
{\begin{tabular}{c|c}	\hline 

\textbf{Network 1: Random network}	& \textbf{Network 2: Ring network}    \\ 
$W_1=  \begin{bmatrix} 
0   &  1  &  0   & 0  &  0 \\
0 &  0 &  1/2  & 1/2  & 0\\
5/6 &  0 &  1/6 &  0  & 0 \\
1/6  & 0 &  1/6  & 1/2  & 1/6\\
0  & 0  & 1/6  & 0 &  5/6
\end{bmatrix}$ &
	
$W_2=  \begin{bmatrix} 
0  &  1/2  &  0   & 0  &  1/2 \\
1/2   &   0   &  1/2  &  0  &  0\\
0  &  1/2  &  0  &   1/2   &   0 \\
0  &  0  &  1/2   &  0  &  1/2\\
1/2   &  0  &  0  &  1/2  &  0
\end{bmatrix}$ \\ \hline 
	
\textbf{Network 3: Complete network} & \textbf{Network 4: Star network} \\

$W_3=  \begin{bmatrix} 
0   &  1/4  &  1/4    & 1/4   &  1/4  \\
1/4    &   0   &  1/4  & 1/4  &  1/4 \\
1/4   &  1/4   &  0  &   1/4    &   1/4  \\
1/4  &  1/4   & 1/4    &  0 &  1/4\\
1/4    &  1/4  &1/4   &  1/4   &  0
\end{bmatrix}$

& $W_4=  \begin{bmatrix} 
0   &  1/4  &  1/4    & 1/4   &  1/4 \\
1/4   &   0   & 0  &  0  &  0\\
1/4  & 0  &  0 &   0   &   0 \\
1/4  & 0  &   0   & 0  &  0\\
1/4   &  0  &  0 &  0  &  0
\end{bmatrix}$ \\
	\hline	
	\end{tabular}}
	\label{tab:topology}
\end{table}

The clock offset is $0$ for all nodes, and the unknown clock drifts are $\gamma=[0.65\; 0.79\; 0.91\; 1.25\; 1.4]^T$. The evolution of the local clocks with respect to the absolute time $t$ is shown in Figure~\ref{fig:local_no_offset}. We can see that without any control, the local clocks will diverge.  

\begin{figure}
	\centering
	\subfloat[Local clocks]{
		\label{fig:local_no_offset}
		\resizebox*{6.5cm}{!}{\includegraphics{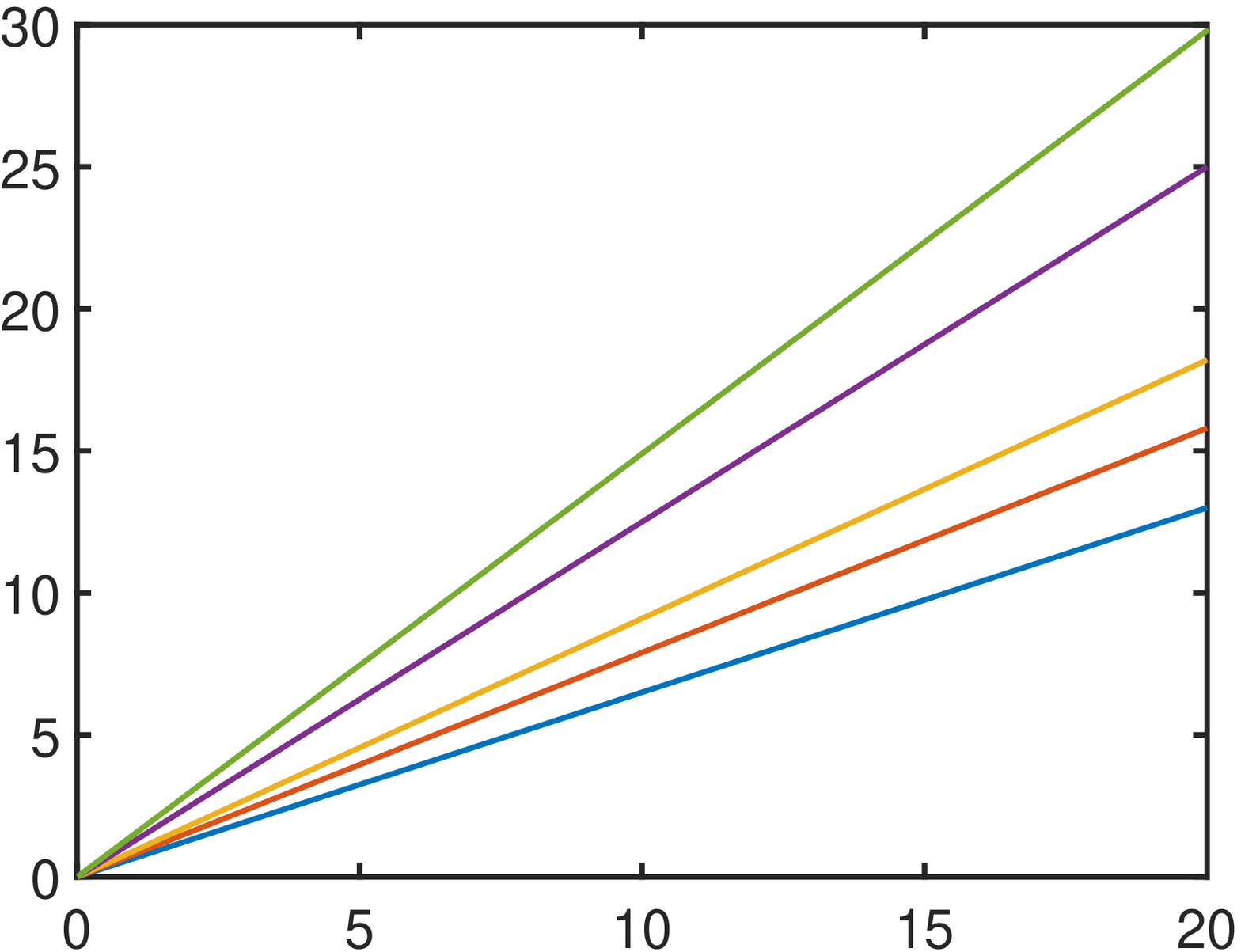}}
		\put(-196,80){\scriptsize $l(t)$}
		\put(-90,-7){\scriptsize $t$}}\hspace{20pt} 
	\subfloat[Virtual clocks]{
		\label{fig:virtual_no_offset}
		\resizebox*{6.5cm}{!}{\includegraphics{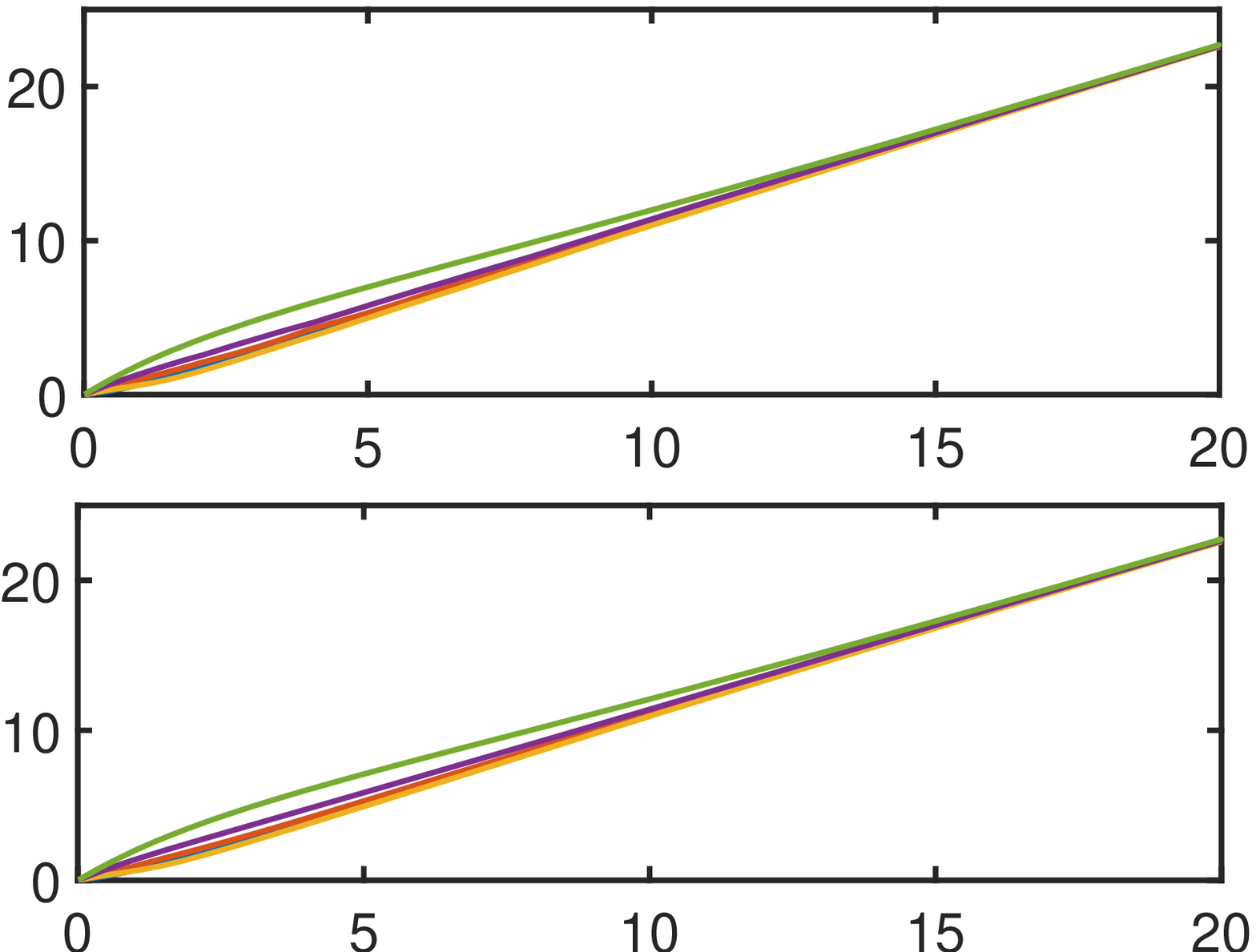}}
		\put(-200,102){\scriptsize $T(t)$}
		\put(-200,35){\scriptsize $T(t)$}
		\put(-90,-7){\scriptsize $t$}}
 
	\caption{Plots of the simulation results of the clock synchronization on Network 1. (a) The local clocks are the same for both algorithms. (b) Virtual clocks with the implementation of event-triggered control. Both \textbf{Algorithm 1} (top) and \textbf{Algorithm 2} (bottom) are able to synchronize the virtual clocks.}
	\label{fig:clocks}
\end{figure}

Then, we implement \textbf{Algorithm 1} and \textbf{Algorithm 2} with the control law \eqref{eq:dynamic_drift}, triggering functions \eqref{eq:clk_Alg1_triggFunc} and \eqref{eq:clk_Alg2_triggFunc} developed from \textbf{Algorithm 1} and \textbf{Algorithm 2}, respectively, to achieve clock synchronization. The involved parameters are set to be $\sigma_i = 0.5$, and $b_i=c_i=0.5/d_i^{out}$ for all $i\in\{1,\dots,N\}$. Both algorithms are able to synchronize the virtual clocks on all four networks and we take the result on \textbf{Network 1} as an example and show the virtual clock evolution in Figure~\ref{fig:virtual_no_offset}. However, except for the synchronization, we have no idea which algorithm performs better on other evaluation metrics, for example, the convergence speed and total energy consumption. Also, the performance evaluation result may differ for different network topologies. Therefore, in the following simulations, we show the difference of the two different algorithms on four network topologies with different evaluation metrics.

We plot the triggering instances of all nodes in the network when implementing the event-triggered algorithms in Figure \ref{fig:trigg_inst_rand}, \ref{fig:trigg_inst_ring}, \ref{fig:trigg_inst_comp}, \ref{fig:trigg_inst_star}. We notice that in general, the number of events triggered when implementing \textbf{Algorithm 1} is less than that when implementing \textbf{Algorithm 2}. We also plot the evolution of Lyapunov functions, i.e., $V_1(y(t))=\frac{1}{2}(y(t)-\bar{y})^T(y(t)-\bar{y})$, $V_2(y(t))=\frac{1}{2}y(t)^TL^Ty(t)$, and $V_\lambda(y(t)) = \lambda V_1(y(t))+(1-\lambda)V_2(y(t))$ for all networks with $\sigma_i=0.5$ for all agents in Figure \ref{fig:lyapun_rand}, \ref{fig:lyapun_ring}, \ref{fig:lyapun_comp}, \ref{fig:lyapun_star}, which again corroborates our analysis that both algorithms ensure convergence, or in this case, synchronization for the resulting systems. We can see that except for \textbf{Network 3}, the Lyapunov function of \textbf{Algorithm 2} in the other three networks converges faster than that of \textbf{Algorithm 1}. It is also noted that when the number of events triggered when implementing \textbf{Algorithm 2} is noticeably larger than that in implementing \textbf{Algorithm 1}, the convergence speed of the Lyapunov function in \textbf{Algorithm 2} is also noticeably faster than that in \textbf{Algorithm 1}. This is reasonable, since the more events are triggered, the more information is communicated in the network, and the faster the consensus will be reached.

\begin{figure}
	\centering
	\subfloat[Network 1, triggering instances]{
		\label{fig:trigg_inst_rand}
		\resizebox*{6.5cm}{!}{\includegraphics{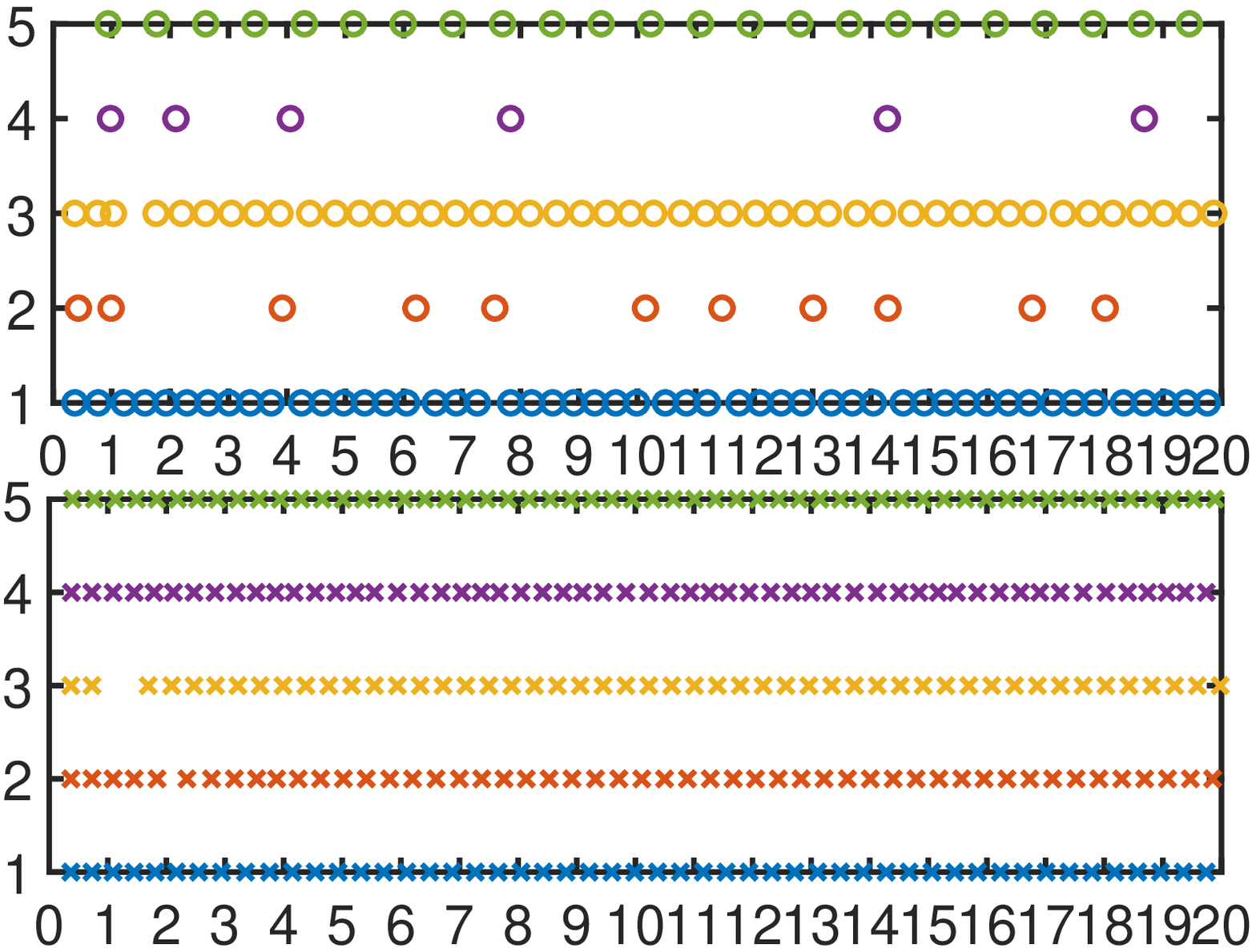}}
		\put(-210,105){\scriptsize Events}
		\put(-210,35){\scriptsize Events}
		\put(-90,-7){\scriptsize $t$}}\hspace{20pt} 
	\subfloat[Network 1, Lyapunov function]{
		\label{fig:lyapun_rand}
		\resizebox*{6.5cm}{!}{\includegraphics{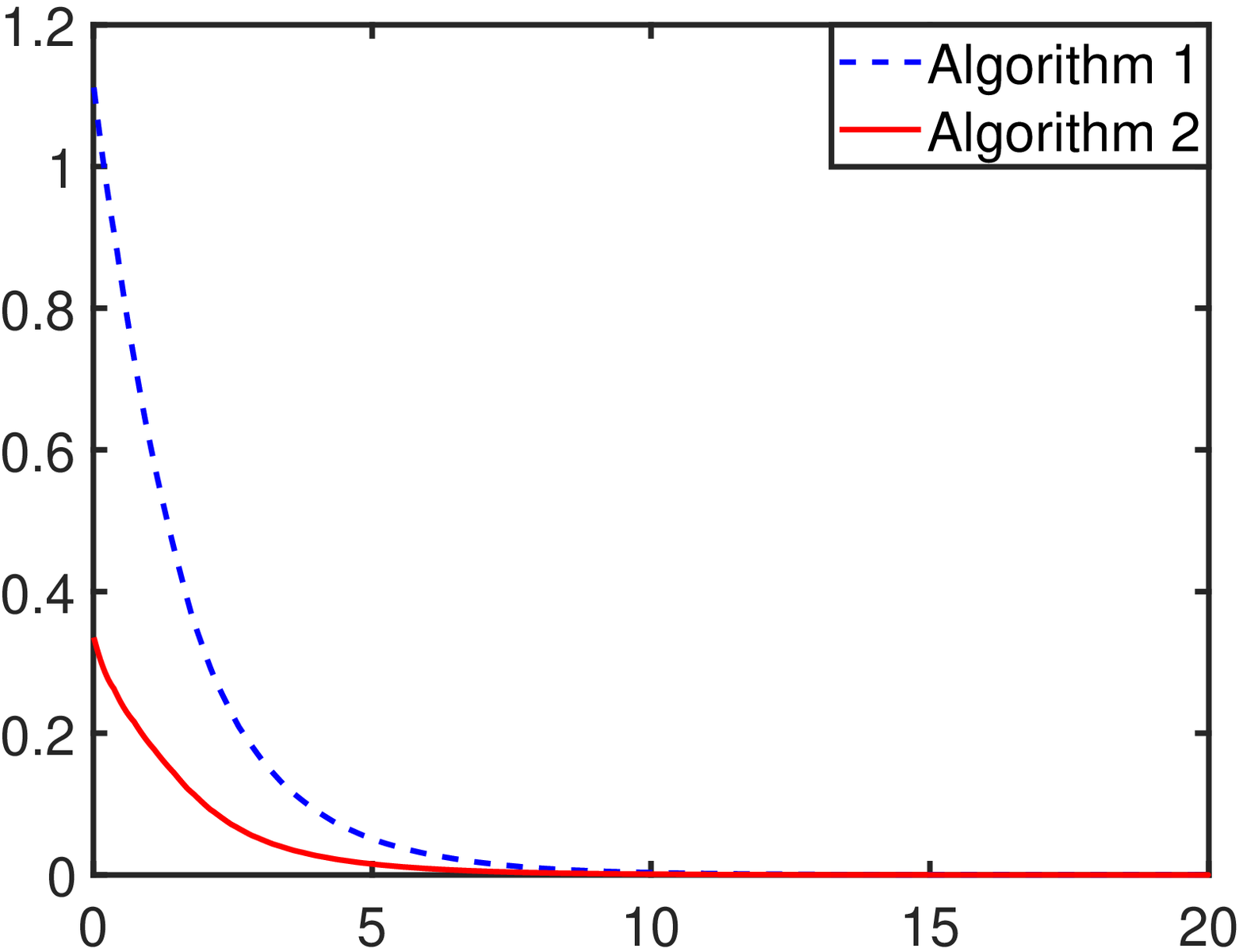}}
		\put(-202,70){\scriptsize $V(t)$}
		\put(-90,-7){\scriptsize $t$}}\\
	\subfloat[Network 2, triggering instances]{
		\label{fig:trigg_inst_ring}
		\resizebox*{6.5cm}{!}{\includegraphics{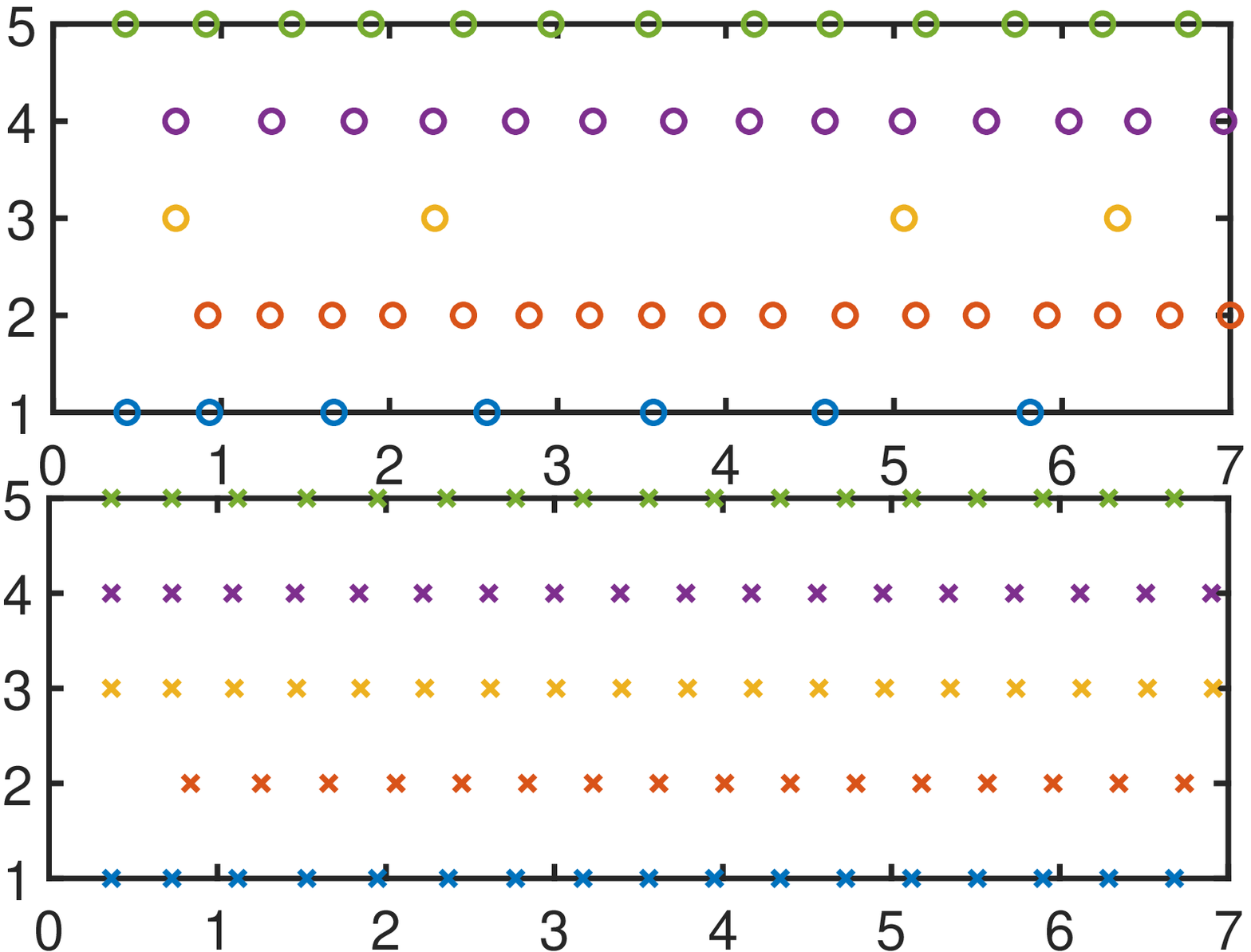}}
		\put(-210,105){\scriptsize Events}
		\put(-210,35){\scriptsize Events}
		\put(-90,-7){\scriptsize $t$}}\hspace{20pt} 
	\subfloat[Network 2, Lyapunov function]{
		\label{fig:lyapun_ring}
		\resizebox*{6.5cm}{!}{\includegraphics{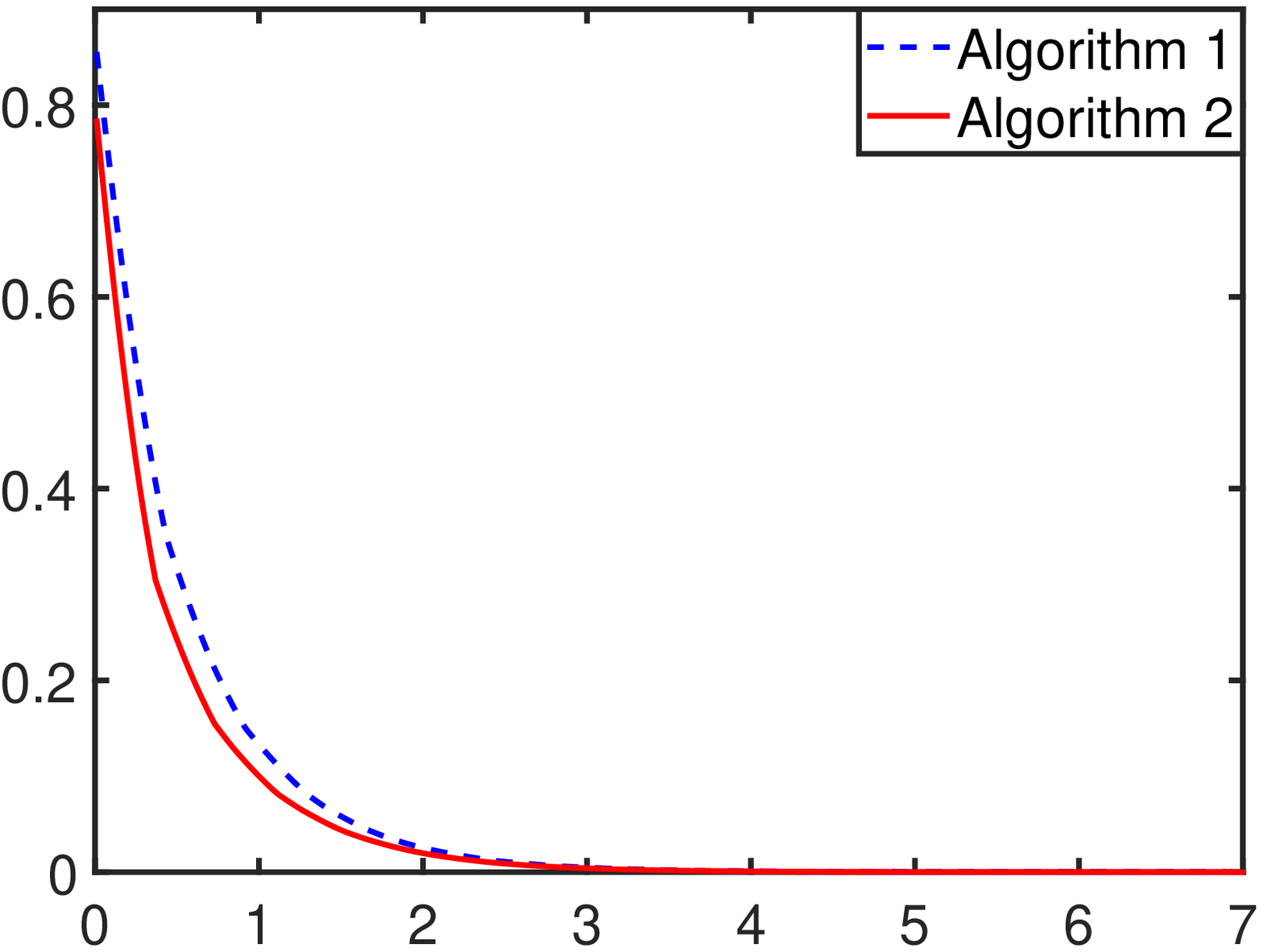}}
		\put(-202,70){\scriptsize $V(t)$}
		\put(-90,-7){\scriptsize $t$}}\\
	\subfloat[Network 3, triggering instances]{
		\label{fig:trigg_inst_comp}
		\resizebox*{6.5cm}{!}{\includegraphics{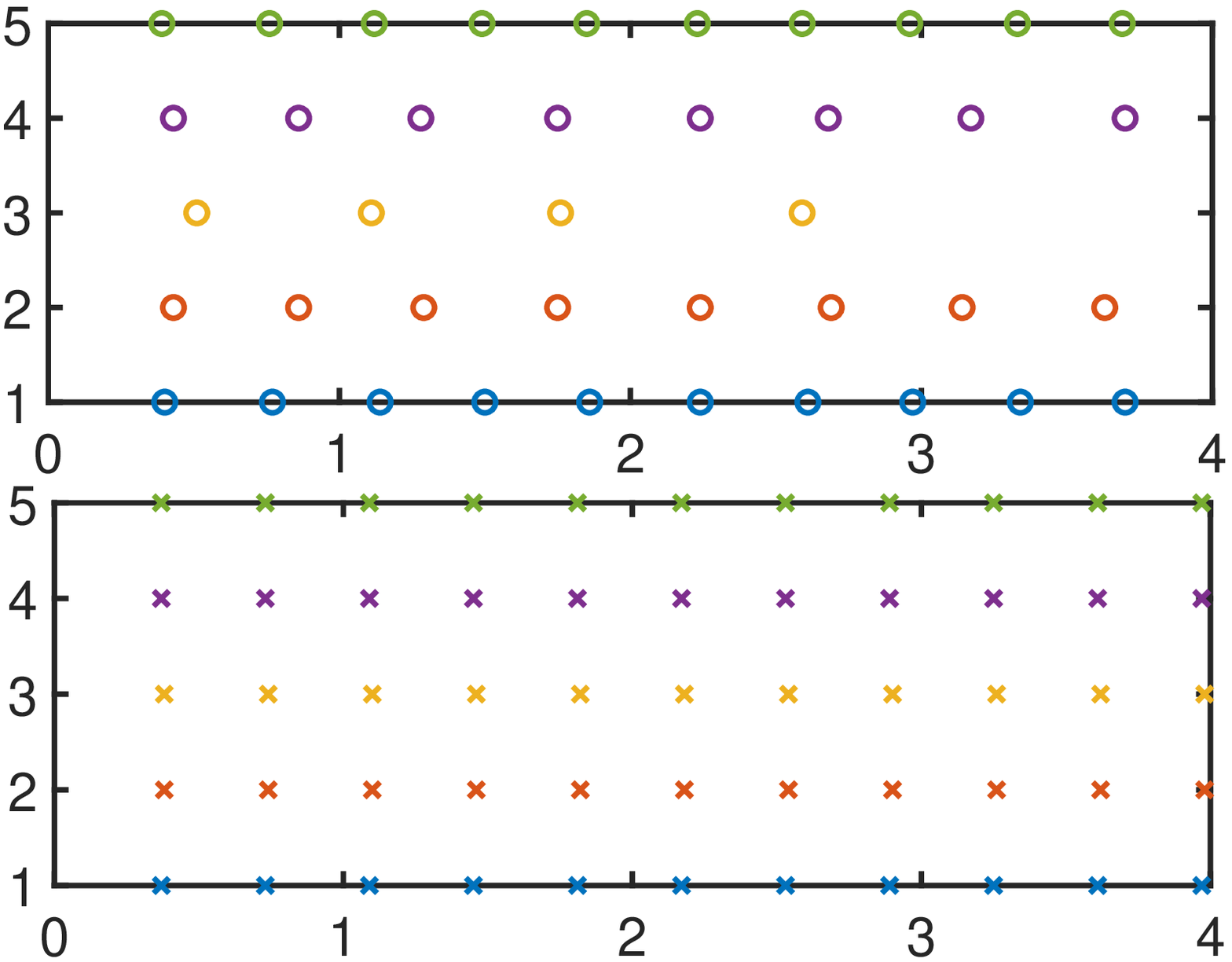}}
	    \put(-210,105){\scriptsize Events}
	    \put(-210,35){\scriptsize Events}
		\put(-90,-7){\scriptsize $t$}}\hspace{20pt} 
	\subfloat[Network 3, Lyapunov function]{
		\label{fig:lyapun_comp}
		\resizebox*{6.5cm}{!}{\includegraphics{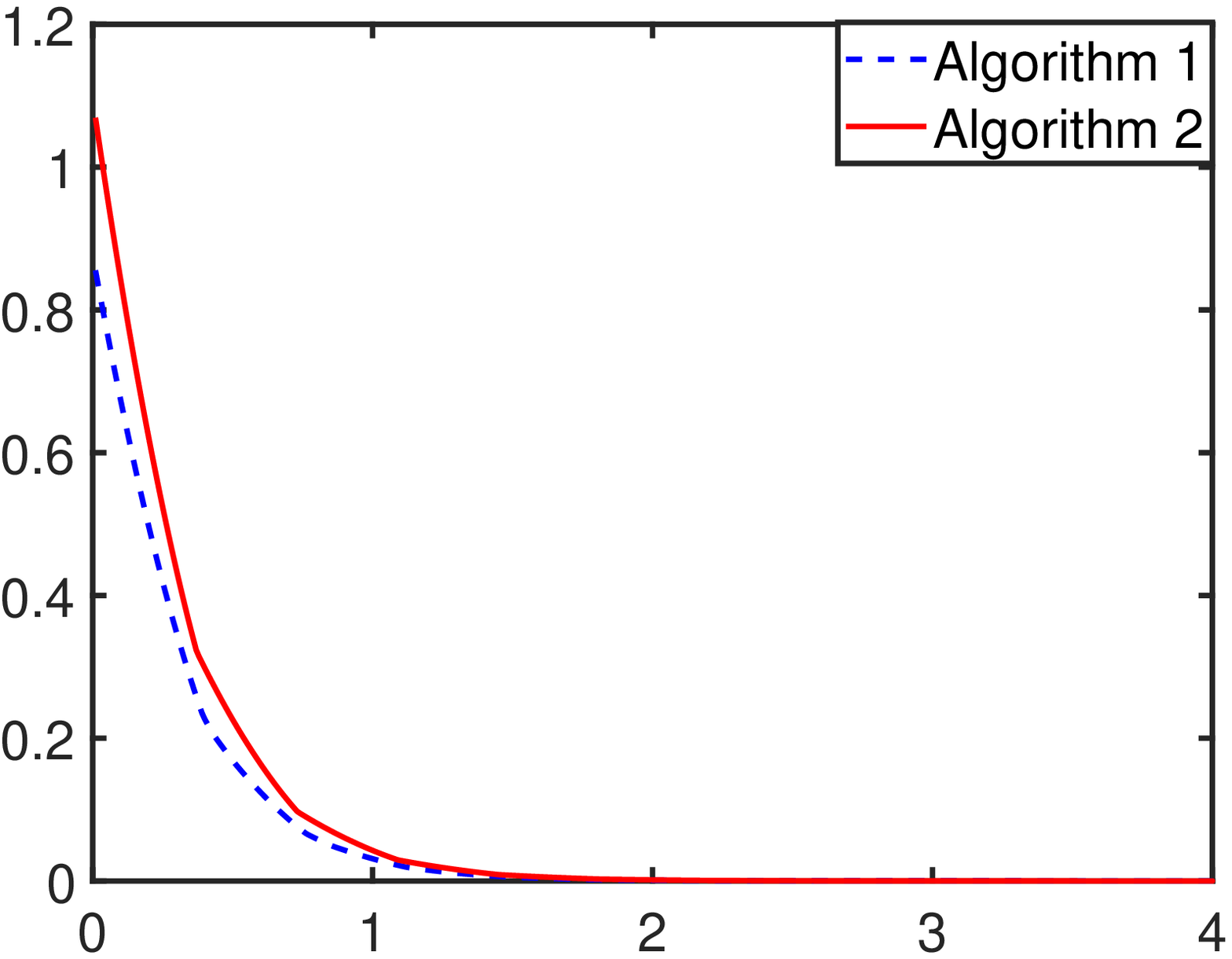}}
		\put(-202,70){\scriptsize $V(t)$}
		\put(-90,-7){\scriptsize $t$}}\\
	\subfloat[Network 4, triggering instances]{
		\label{fig:trigg_inst_star}
		\resizebox*{6.5cm}{!}{\includegraphics{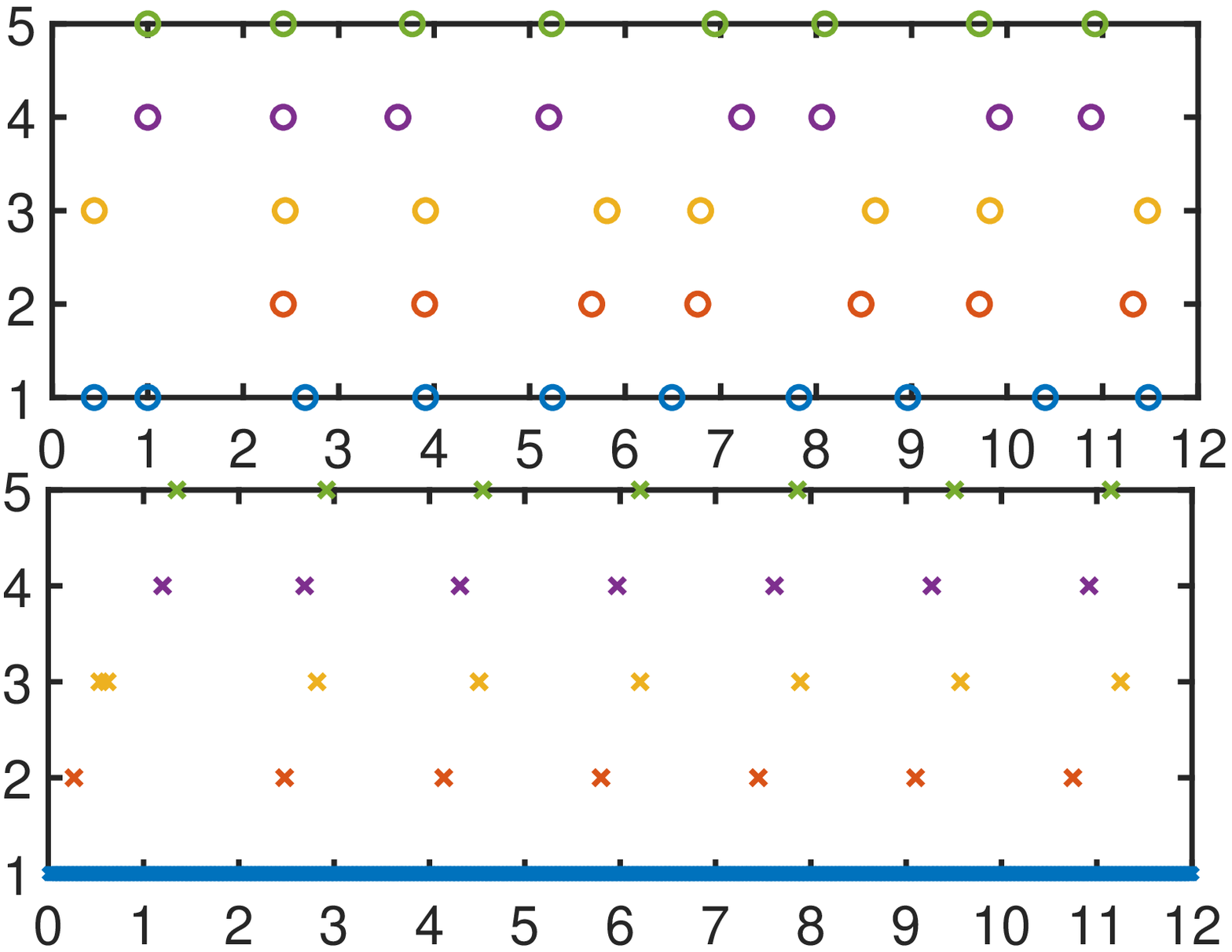}}
		\put(-210,105){\scriptsize Events}
		\put(-210,35){\scriptsize Events}
		\put(-90,-7){\scriptsize $t$}}\hspace{20pt} 
	\subfloat[Network 4, Lyapunov function]{
		\label{fig:lyapun_star}
		\resizebox*{6.5cm}{!}{\includegraphics{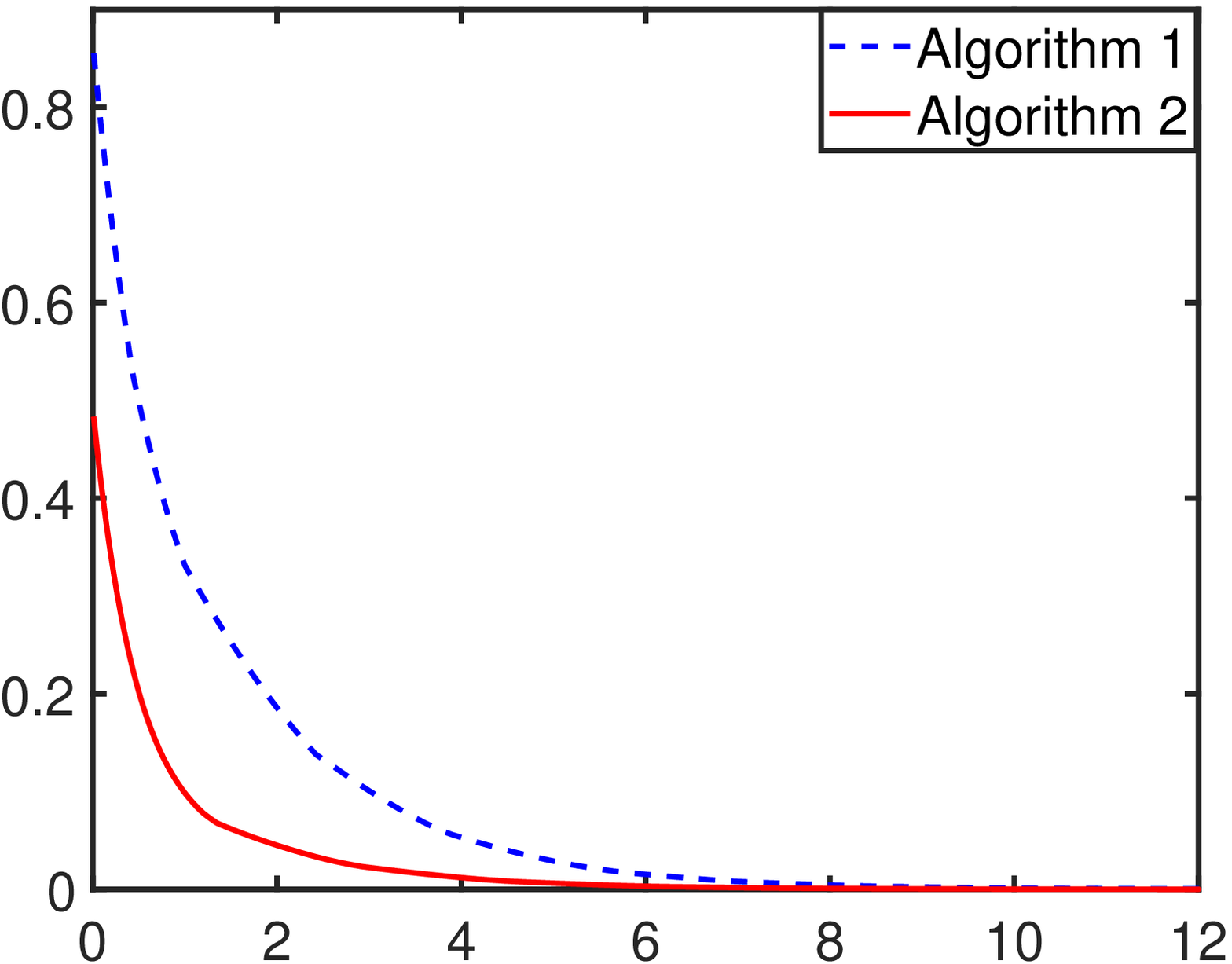}}
		\put(-202,70){\scriptsize $V(t)$}
		\put(-90,-7){\scriptsize $t$}}
	\caption{Plots of the triggering instances and the evolution of Lyapunov candidate functions on four networks when implementing both Algorithms. For figure (a), (c), (e), (g), \textbf{Algorithm 1} is on the top and \textbf{Algorithm 2} is on  the bottom.}
	\label{fig:trigger_lyap}
\end{figure}

The above simulations corroborate our argument that depending on the chosen evaluation metric, either algorithm can be argued to be `better' than the other. To better quantize/visualize the performance difference of the two algorithms and demonstrate our motivations for proposing the \textbf{Combined Algorithm}, we then executing all algorithms with respect to varying~$\sigma$. We evaluate these algorithms with four performance metrics, 1) the total number of events triggered, denoted by $N_e$, 2) the time needed for each network to reach a 99\% convergence of the Lyapunov function, denoted by  $T_{con}$, 3) the total communication energy required to achieve a 99\% convergence, denoted by $E$, and 4) the square of the $H_2$-norm of the system, denoted by $\mathcal{C}$~\citep{dezfulian2018performance}. The total communication energy needed is calculated by multiplying the power in units of milliwatt (mW) with $T_{con}$, where we adopt the following power calculation model in units of mW~\citep{martins2008jointly}:
\begin{equation}
\mathcal{P}= \sum_{i=1}^N\left[\sum_{j\in\{1,\dots,N\},j\neq i}\eta 10^{0.1P_{i\rightarrow j}+\zeta \|\alpha_i(l_i(t))-\alpha_j(l_j(t))\|}\right],
\nonumber
\end{equation}
where~$\zeta>0$ and~$\eta>0$ depend on the characteristics of the wireless medium and $P_{i\rightarrow j}$ is the power of the signal transmitted from agent $i$ to agent $j$ in units of dBmW. Similar as \citep{nowzari2012self}, we set $\eta$, $\zeta$ and $P_{i\rightarrow j}$ to be $1$. The square of the $H_2$-norm, $\mathcal{C}$ is defined by
\begin{equation}
 \mathcal{C}:=\int_{t=0}^{\infty} \sum_{i=1}^N(y_i(t)-\bar{y})^2 dt, \nonumber  
\end{equation}  
where $y_i(t)$ is the modified drift of each local clock and $\bar{y}$ is the average of the modified drift of the system.

The involved parameters are set to be $b_i=c_i=0.5/d_i^{out}$ for all $i\in\{1,\dots,N\}$. The same control law \eqref{eq:dynamic_drift} is applied. For \textbf{Algorithm 1} and \textbf{Algorithm 2} that achieve clock synchronization, their triggering functions are given by \eqref{eq:clk_Alg1_triggFunc} and \eqref{eq:clk_Alg2_triggFunc}, respectively. For the \textbf{Combined Algorithm}, its triggering function is given by \eqref{eq:clk_ComAlg_triggFunc}, with $\lambda=0.5$. For each~$\sigma$, we run $10$ simulations with random clock drift that satisfies to $\gamma_i\in (0.7,1.3)$ and obtain the average of $N_e$, $T_{con}$, $E$, and $C$ in each simulation.  

\begin{figure}
	\centering
	\subfloat[Network 1]{
		\label{fig:Ne_Tcon_rand}
		\resizebox*{6.5cm}{!}{\includegraphics{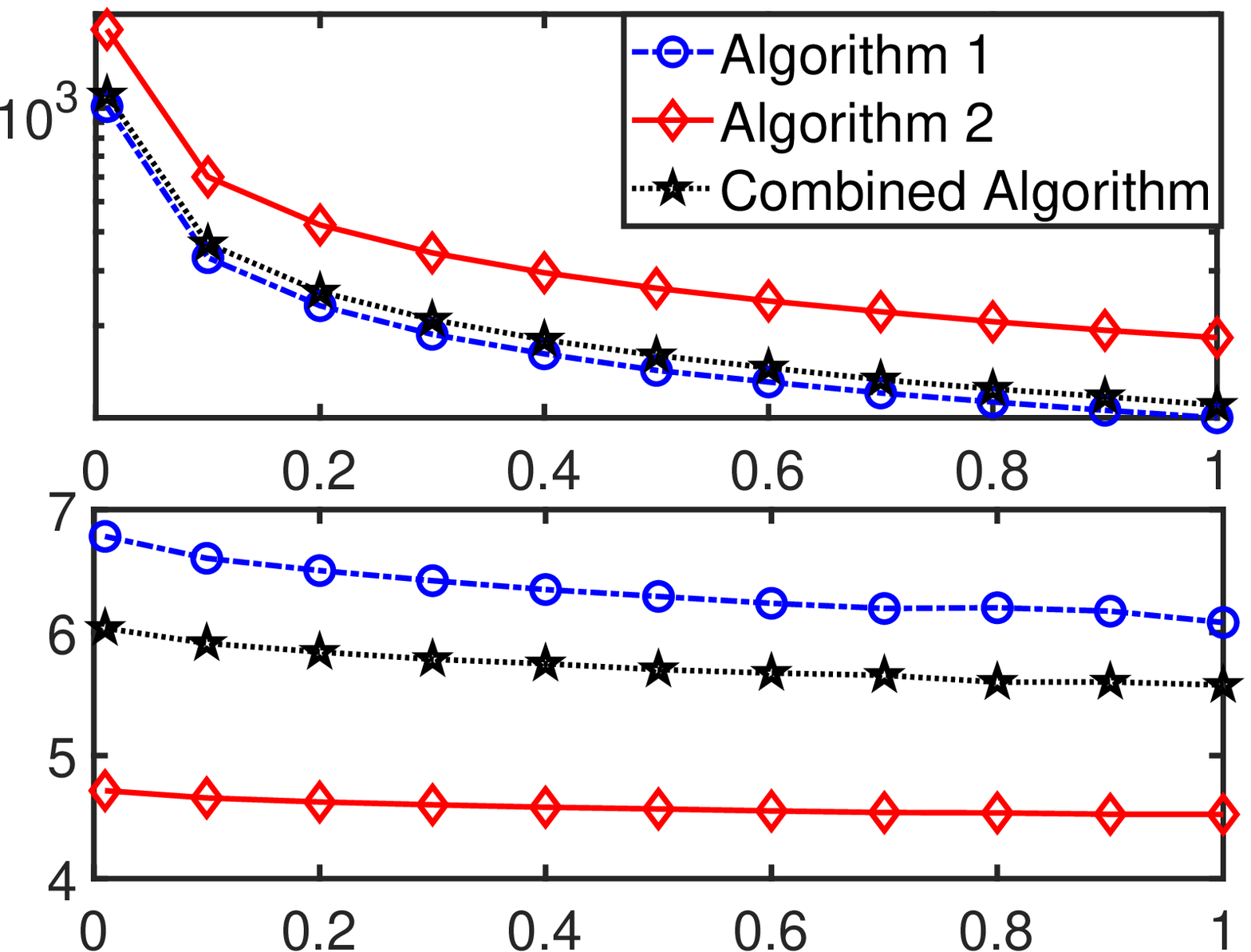}}
		\put(-190,105){\scriptsize $N_e$}
		\put(-192,38){\scriptsize $T_{con}$}
		\put(-90,-7){\scriptsize $\sigma$}}\hspace{20pt} 
	\subfloat[Network 1]{
		\label{fig:energy_C_rand}
		\resizebox*{6.5cm}{!}{\includegraphics{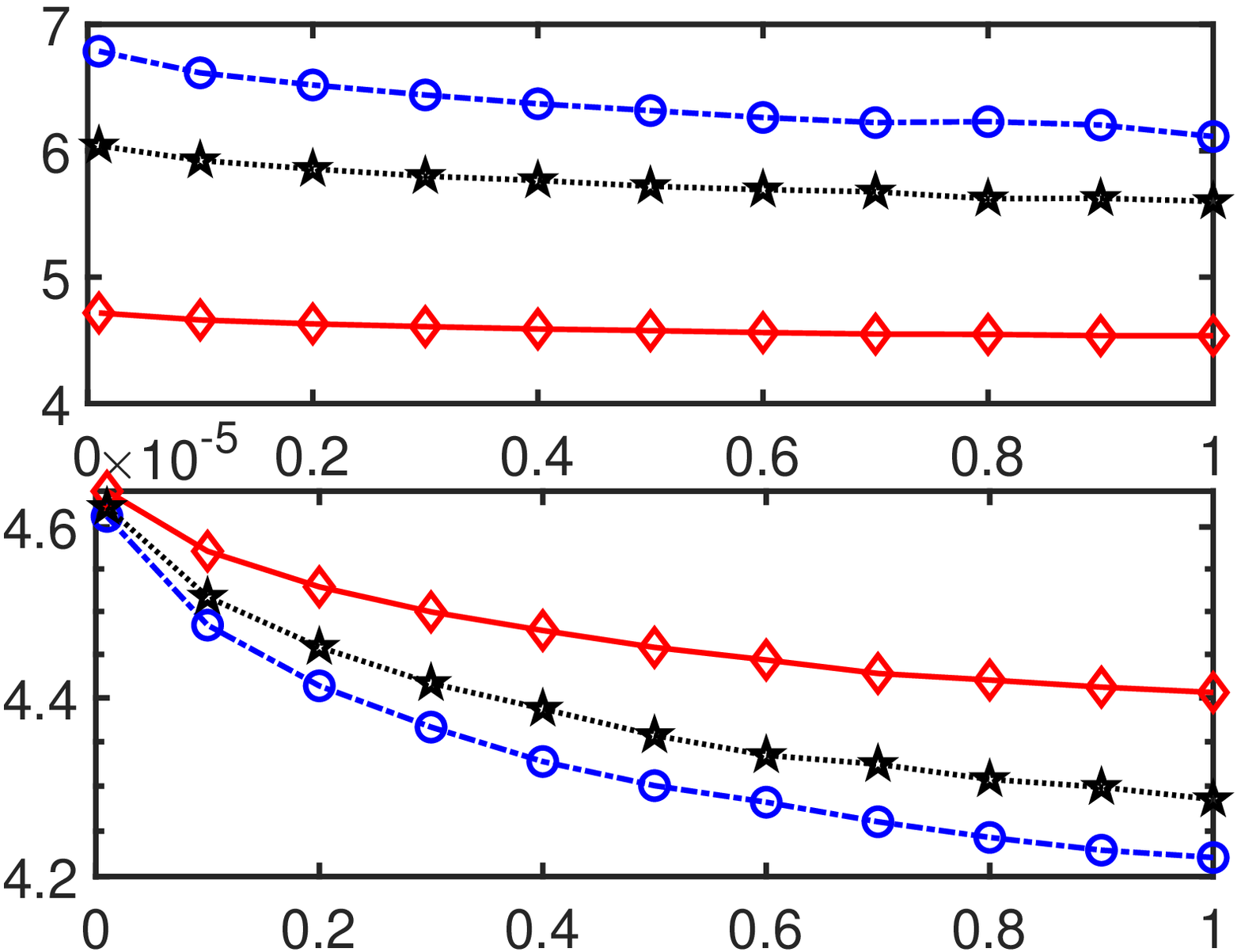}}
		\put(-187,105){\scriptsize $E$}
		\put(-191,38){\scriptsize $\mathcal{C}$}
		\put(-90,-7){\scriptsize $\sigma$}}\\
	\subfloat[Network 2]{
		\label{fig:Ne_Tcon_ring}
		\resizebox*{6.5cm}{!}{\includegraphics{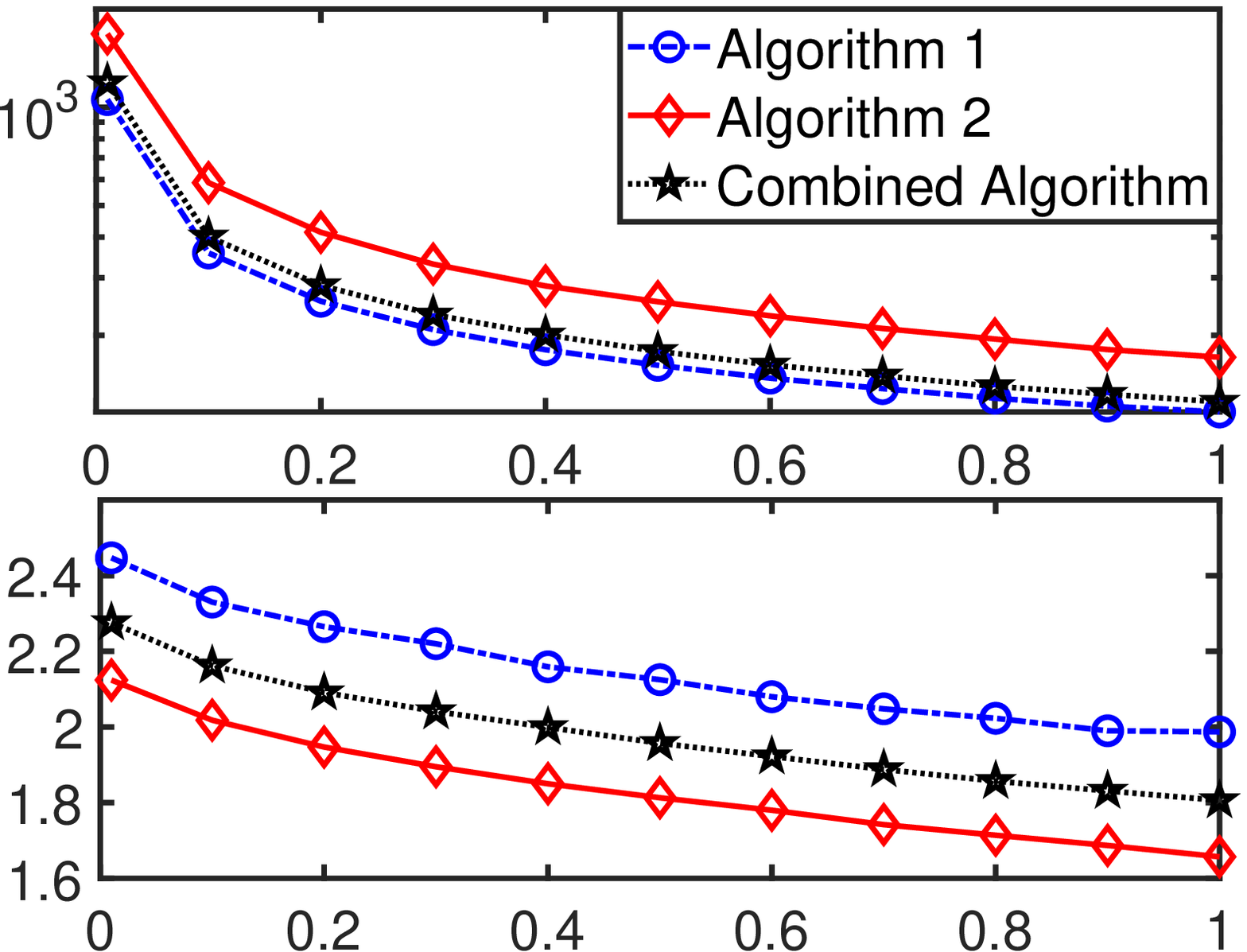}}
		\put(-190,105){\scriptsize $N_e$}
		\put(-200,38){\scriptsize $T_{con}$}
		\put(-90,-7){\scriptsize $\sigma$}}\hspace{20pt} 
	\subfloat[Network 2]{
		\label{fig:energy_C_ring}
		\resizebox*{6.5cm}{!}{\includegraphics{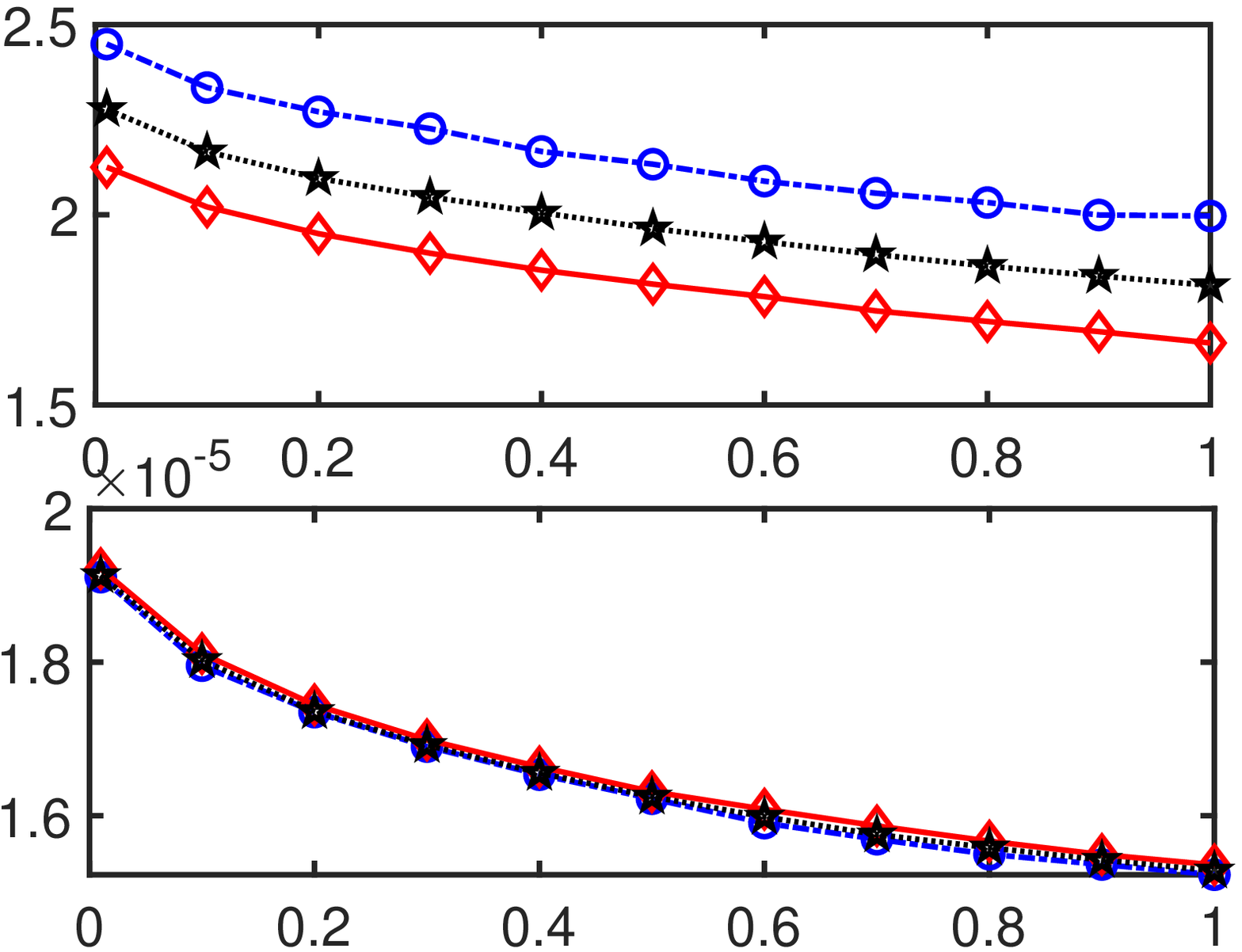}}
		\put(-187,105){\scriptsize $E$}
		\put(-191,35){\scriptsize $\mathcal{C}$}
		\put(-90,-7){\scriptsize $\sigma$}}\\
	\subfloat[Network 3]{
		\label{fig:Ne_Tcon_comp}
		\resizebox*{6.5cm}{!}{\includegraphics{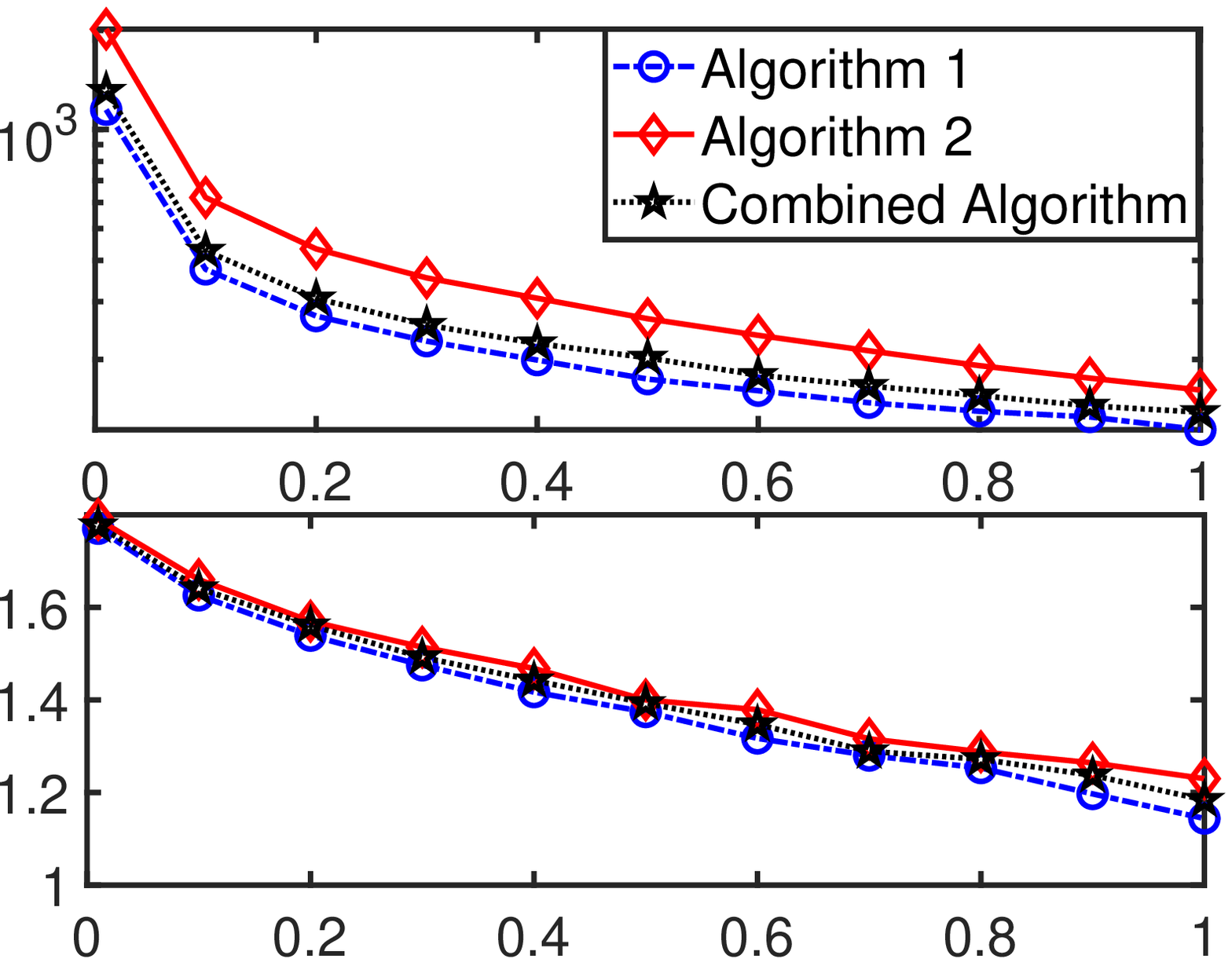}}
		\put(-190,105){\scriptsize $N_e$}
		\put(-202,38){\scriptsize $T_{con}$}
		\put(-90,-7){\scriptsize $\sigma$}}\hspace{20pt} 
	\subfloat[Network 3]{
		\label{fig:energy_C_comp}
		\resizebox*{6.5cm}{!}{\includegraphics{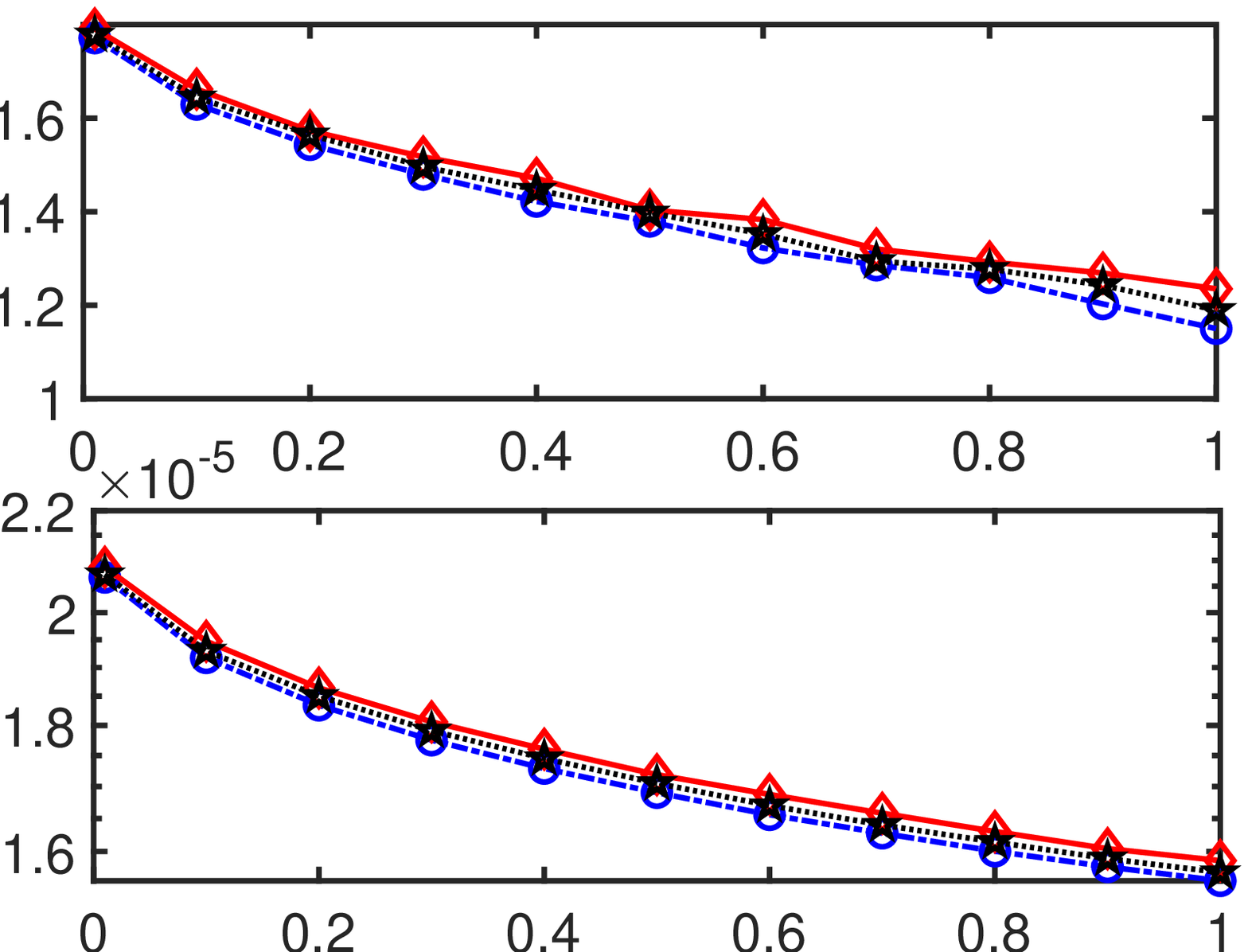}}
		\put(-194,105){\scriptsize $E$}
		\put(-191,38){\scriptsize $\mathcal{C}$}
		\put(-90,-7){\scriptsize $\sigma$}}\\
	\subfloat[Network 4]{
		\label{fig:Ne_Tcon_star}
		\resizebox*{6.5cm}{!}{\includegraphics{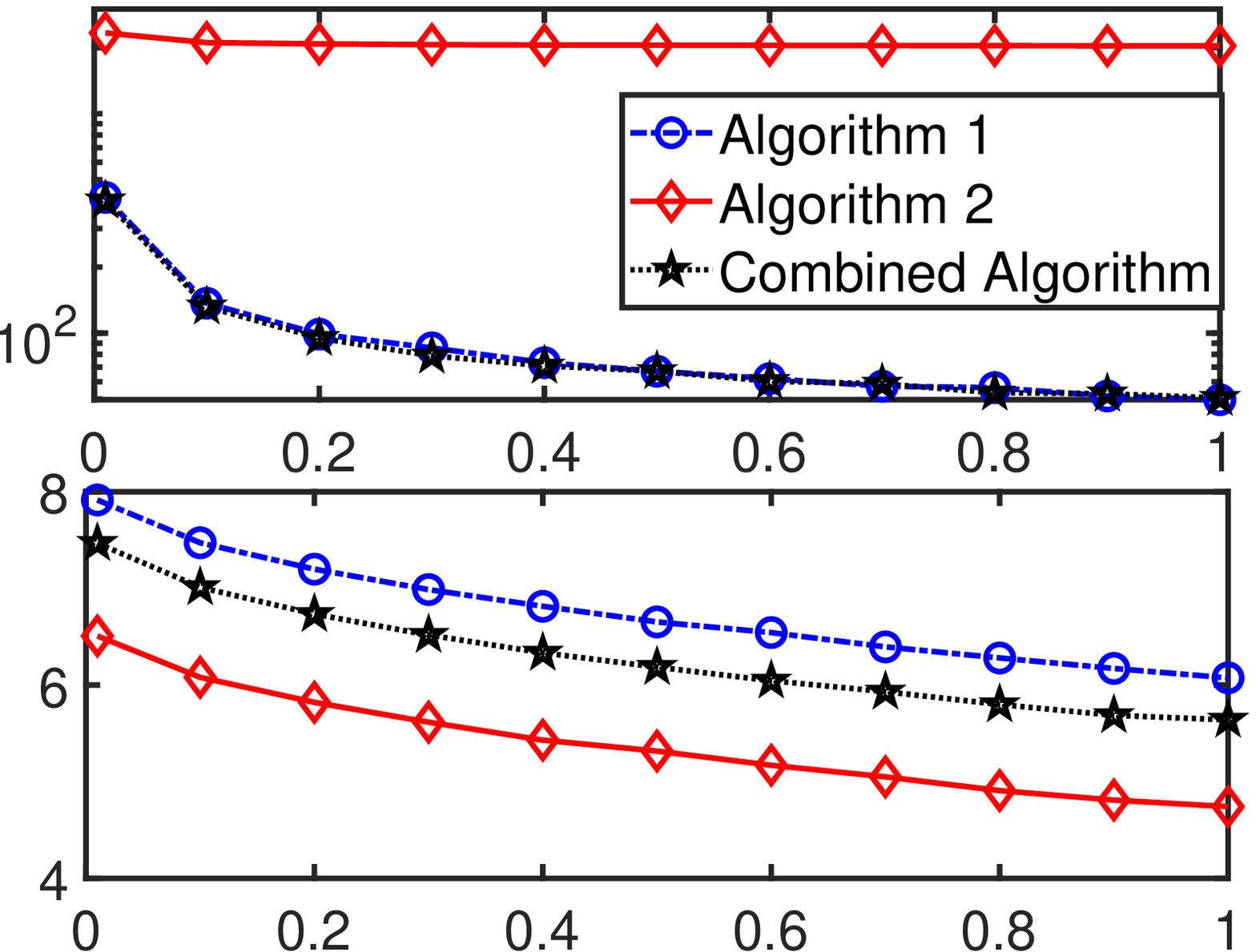}}
		\put(-190,105){\scriptsize $N_e$}
		\put(-199,38){\scriptsize $T_{con}$}
		\put(-90,-7){\scriptsize $\sigma$}}\hspace{20pt} 
	\subfloat[Network 4]{
		\label{fig:energy_C_star}
		\resizebox*{6.5cm}{!}{\includegraphics{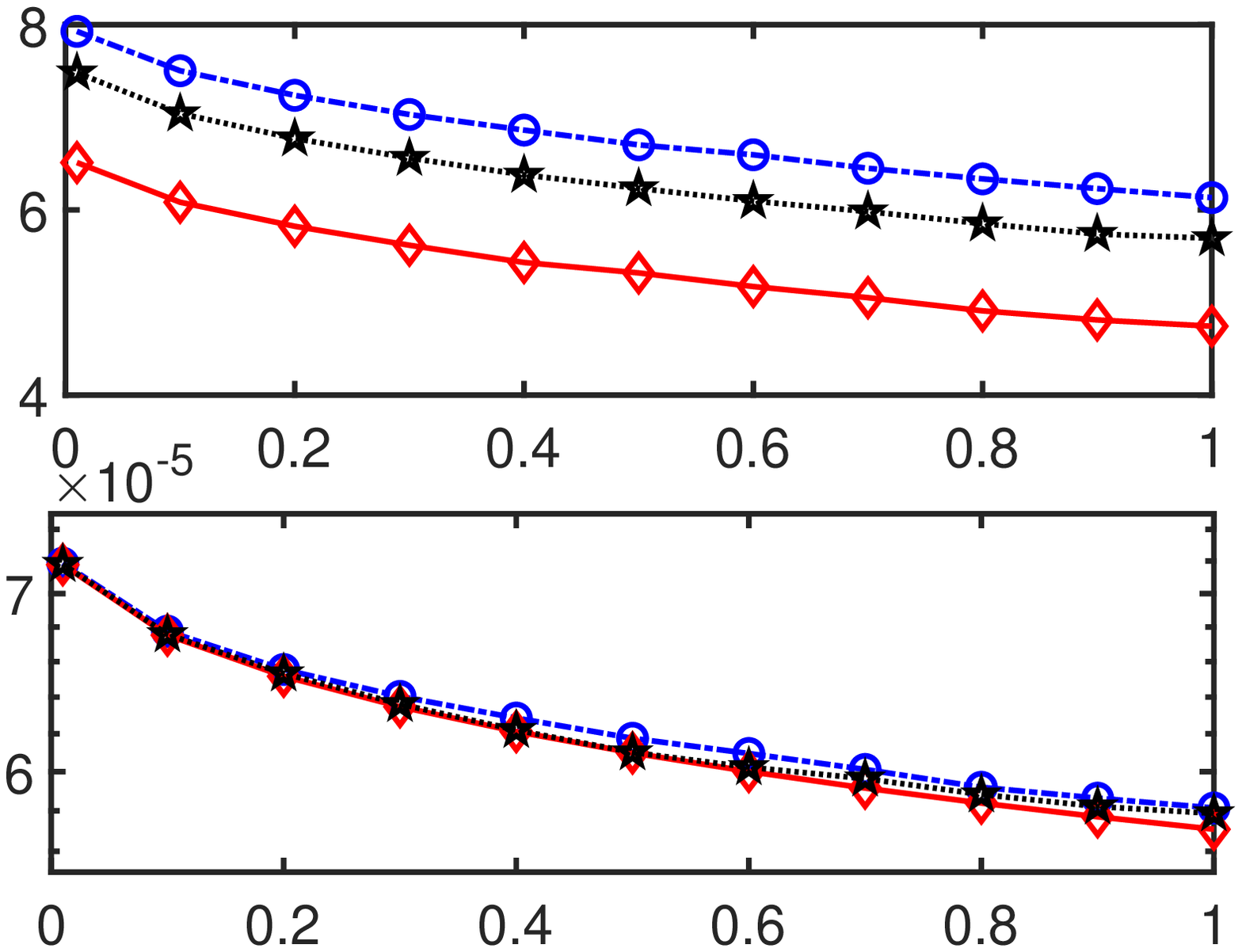}}
		\put(-192,105){\scriptsize $E$}
		\put(-191,38){\scriptsize $\mathcal{C}$}
		\put(-90,-7){\scriptsize $\sigma$}}
	\caption{Plots of different evaluation metrics. For figure (a), (c), (e), (g), top: total events triggered, bottom: convergence time (bottom); for figure (b), (d), (f), (h), top: energy consumption, bottom: $H_2$-norm squared. }
	\label{fig:performance}
\end{figure}

From the top figures in Figure \ref{fig:Ne_Tcon_rand}, \ref{fig:Ne_Tcon_ring}, \ref{fig:Ne_Tcon_comp}, and \ref{fig:Ne_Tcon_star}, we can see that for different $\sigma$, the total number of events triggered in the system when executing \textbf{Algorithm 2} is larger than that when executing \textbf{Algorithm 1}. On the other hand, from the bottom figures in Figure \ref{fig:Ne_Tcon_rand}, \ref{fig:Ne_Tcon_ring}, \ref{fig:Ne_Tcon_comp}, and \ref{fig:Ne_Tcon_star}, we can see that the time needed to reach a 99\% convergence of the system is usually much less when executing \textbf{Algorithm 2}, with the only exception for the complete network, where both algorithms have similar convergence speed. As the total communication energy consumption is related with both the total number of events triggered and the time required to reach convergence, we can see from the top figures in Figure \ref{fig:energy_C_rand}, \ref{fig:energy_C_ring}, \ref{fig:energy_C_comp}, and \ref{fig:energy_C_star} that either algorithm can outperform the other in terms of the total energy consumption for different network topologies. The $H_2$-norm squared evaluates the distance of each local modified drift with the average modified drift of the system, whose value therefore also indicates the convergence speed of the system to some extent, see the bottom figures in Figure \ref{fig:energy_C_rand}, \ref{fig:energy_C_ring}, \ref{fig:energy_C_comp}, and \ref{fig:energy_C_star}. Therefore, depending on different network topologies and depending on what performance metrics are most important for the application at hand, it may be desirable to implement different types of event-triggered algorithms. Note that the \textbf{Combined Algorithm} can easily be tuned to approach either \textbf{Algorithm 1} or \textbf{Algorithm 2} or anything in between to meet varying system needs by setting values for~$\lambda$. This also motivates our future work of adapting~$\lambda$ online to further improve performance.

\section{Conclusion}
\label{sec:con}
This paper proposes a class of distributed event-triggered communication and control law for multi-agent systems whose underlying directed graphs are weight-balanced. The class of algorithms are developed from a class of Lyapunov functions, each of which is a linear combination (parameterized by~$\lambda \in [0,1]$) of two Lyapunov functions. Each~$\lambda$ defines a new Lyapunov function coupled with a new event-triggered coordination algorithm which uses that particular function to guarantee correctness and is able to exclude the possibility of Zeno behavior. We show that the proposed entire class of event-triggered algorithms can be tuned to meet varying performance needs by adjusting $\lambda$. We also apply the proposed distributed event-triggered algorithms to solve the practical clock synchronization problem in WSNs. For the future research, we will focus on developing a unified evaluation metric (which is a function of different performance needs) that can be used to evaluate the performance of different algorithms. In that way, the class of distributed algorithms will be developed from a tunable algorithm to an adaptive algorithm.

 

\section*{Funding}
This work was supported by the NSF under Grant \#204294. 

\bibliographystyle{apacite}
\bibliography{reference}

\appendix
\section{Proof of Lemma \ref{lemma:V2_dot_upbound}}
\begin{proof}
	Omit the time stamp $t$ for simplicity. The derivative of $V_2(x)$ takes the form
	\begin{equation}
	\label{eq:V2_dot}
	\textstyle \dot{V}_2(x)=x^TL^T\dot{x}.
	\end{equation}
	Substitute the vector form $x=\hat{x}-e$ into \eqref{eq:V2_dot}, and expand it with \eqref{eq:single_dymic}, we have
	\begin{equation}
	\label{eq:V2_dot_expansion}
	\begin{split}
	\textstyle \dot{V}_2(x)&=\hat{x}^TL^T\dot{x}-e^TL^T\dot{x}\\
	&=\sum_{i=1}^N(\sum_{j\in \mathcal{N}_i^{out}}w_{ij}(\hat{x}_i-\hat{x}_j)u_i-\sum_{j\in \mathcal{N}_i^{out}}w_{ij}(e_i-e_j)u_i)\\
	&=\sum_{i=1}^N(-u_i^2-\sum_{j\in\mathcal{N}_i^{out}}w_{ij}e_iu_i+\sum_{j\in\mathcal{N}_i^{out}} w_{ij}e_ju_i)\\
	&=\sum_{i=1}^N(-u_i^2-d_i^{out}e_iu_i+\sum_{j\in \mathcal{N}_i^{out}} w_{ij}e_ju_i).
	\end{split}
	\end{equation}
	For $b_i,\;c_j>0$, applyYoung's inequality \eqref{eq:youngs} to the cross terms at the right hand side of \eqref{eq:V2_dot_expansion} gives
	\begin{equation}
	\textstyle \begin{split}
	-d_i^{out}e_iu_i &\leq \frac{d_i^{out}}{2b_i}e_i^2+\frac{d_i^{out}b_i}{2}u_i^2,\\
	\sum_{j\in \mathcal{N}_i^{out}} w_{ij}e_ju_i &\leq \sum_{j\in \mathcal{N}_i^{out}}\frac{w_{ij}}{2c_j}e_j^2+\sum_{j\in \mathcal{N}_i^{out}}\frac{w_{ij}c_j}{2}u_i^2.
	\end{split}
	\nonumber
	\end{equation}
	Since the digraph is weight-balanced, the following equality holds:
	\begin{equation}
	\textstyle \begin{split}
	\sum_{i=1}^N\sum_{j\in \mathcal{N}_i^{out}}\frac{w_{ij}}{2c_j}e_j^2
	=\sum_{i=1}^N\sum_{j\in \mathcal{N}_i^{in}}\frac{w_{ji}}{2c_i}e_i^2 
	=\sum_{i=1}^N \frac{d_i^{in}}{2c_i}e_i^2  
	=\sum_{i=1}^N \frac{d_i^{out}}{2c_i}e_i^2.
	\end{split}
	\nonumber
	\end{equation}
	Combine the above inequalities and equality, we obtain an upper bound for $\dot{V}_2(x)$:
	\begin{equation}
	\label{eq:V2_dot_upper}
	\textstyle \begin{split}
	\dot{V}_2(x) &\leq \sum_{i=1}^N\Big(-u_i^2+\frac{d_i^{out}e_i^2}{2b_i}+\frac{d_i^{out}b_iu_i^2}{2}+\frac{d_i^{out}e_i^2}{2c_i}+\sum_{j\in\mathcal{N}_i^{out}}\frac{w_{ij}c_j}{2}u_i^2\Big)\\
	&=-\sum_{i=1}^N \Bigg[\Big(1-\frac{d_i^{out}b_i}{2}-\sum_{j\in\mathcal{N}_i^{out}}\frac{w_{ij}c_j}{2}\Big)u_i^2-\Big(\frac{d_i^{out}}{2b_i}+\frac{d_i^{out}}{2c_i}\Big)e_i^2\Bigg]\\
	&=-\sum_{i=1}^N \Bigg[\delta_i u_i^2-\Big(\frac{d_i^{out}}{2b_i}+\frac{d_i^{out}}{2c_i}\Big)e_i^2\Bigg],
	\end{split}
	\end{equation}
	with $\delta_i$ defined in \eqref{eq:delta}. To ensure $\delta_i>0$, we require $b_i,c_j<\frac{1}{d_i^{out}}$.
\end{proof}

\end{document}